\documentclass[pdflatex,sn-mathphys-num]{sn-jnl}

\usepackage{graphicx}%
\usepackage{multirow}%
\usepackage{amsmath,amssymb,amsfonts}%
\usepackage{amsthm}%
\usepackage{mathrsfs}%
\usepackage[title]{appendix}%
\usepackage{xcolor}%
\usepackage{textcomp}%
\usepackage{manyfoot}%
\usepackage{booktabs}%
\usepackage{algorithm}%
\usepackage{algorithmicx}%
\usepackage{algpseudocode}%
\usepackage{listings}%
\usepackage{amssymb}
\usepackage{todonotes}

\usepackage{subfigure}

\DeclareMathOperator{\argmax}{argmax}

\newcommand{\be}{\begin{equation}}
\newcommand{\ee}{\end{equation}}

\theoremstyle{thmstyleone}%
\newtheorem{theorem}{Theorem}
\newtheorem{proposition}{Proposition}%

\theoremstyle{thmstyletwo}%
\newtheorem{remark}{Remark}%

\newtheorem{lemma}{Lemma}

\theoremstyle{thmstylethree}%

\begin{document}

\title[Peak Prevalence in  SIR Epidemics]{Approximating Peak Prevalence in Multistage SIR Epidemics}


\author[1,2]{\fnm{Denis} \sur{Tverskoi}}\email{tverskoi.1@osu.edu}
\equalcont{These authors contributed equally to this work.}

\author[1]{\fnm{Andrew} \sur{Gothard}}\email{gothard.17@buckeyemail.osu.edu}

\author*[1,3]{\fnm{Grzegorz A.} \sur{Rempala}}\email{rempala.3@osu.edu}
\equalcont{These authors contributed equally to this work.}

\affil*[1]{\orgdiv{Division of Biostatistics, College of Public Health}, \orgname{The Ohio State University}, \orgaddress{\city{Columbus}, \postcode{43210}, \state{OH}, \country{USA}}}

\affil[2]{\orgdiv{Department of Anthropology, College of Arts and Sciences}, \orgname{The Ohio State University}, \orgaddress{\city{Columbus}, \postcode{43210}, \state{OH}, \country{USA}}}

\affil[3]{\orgdiv{Department of Mathematics, College of Arts and Sciences}, \orgname{The Ohio State University}, \orgaddress{\city{Columbus}, \postcode{43210}, \state{OH}, \country{USA}}}


\abstract{
Estimating peak prevalence is a central problem in epidemic modeling because it determines the period of greatest infectious burden and is closely linked to health-care demand. In multistage SIR models, however, peak prevalence is generally less tractable than in the classical model with exponentially distributed infectious periods. Motivated by the use of weighted infectious-stage aggregates as surrogates for prevalence, we investigate the relationship between the prevalence peak and the maximum of a weighted stage functional in deterministic SI$(k)$R epidemic models.
We show that this relationship depends critically on how the stage-progression rate is scaled as the number of infectious stages increases. Under naive scaling, in which the progression rate remains fixed, the weighted peak is asymptotically equivalent to the prevalence peak and the commonly used factor-two approximation fails. Under Erlang scaling, which preserves the mean infectious period, the multistage model converges to a delay formulation in which prevalence and the weighted stage functional become unweighted and triangularly weighted moving averages of incidence.

This limiting representation provides a theoretical basis for the factor-two approximation and identifies the regimes in which it is accurate. It also explains why this approximation deteriorates as epidemic waves become more sharply peaked. We derive analytical error bounds and develop curvature-based and parameter-based corrections that substantially improve accuracy. Numerical studies confirm these improvements across a broad range of epidemiological parameters. Overall, the results show when weighted-stage peaks can be used reliably as proxies for peak prevalence and how the resulting estimates can be refined when the standard approximation loses accuracy.
}

\keywords{multistage epidemic models, peak prevalence, delay differential equations, asymptotic analysis, Laplace approximation, curvature correction}



\maketitle

\tableofcontents

\section{Introduction}
Estimating the maximum prevalence of infection during an epidemic is a central problem in mathematical epidemiology \citep{anderson1991infectious,KeelingRohani2008}. Much of our understanding of epidemic prevalence and its determinants comes from compartmental models, which are widely used to quantify disease burden, evaluate intervention strategies, and characterize the temporal course of outbreaks. Among the many epidemiological quantities that can be derived from such models, the prevalence peak is of particular practical importance because it determines the period of greatest simultaneous infectious burden and thus directly influences health-care demand and public-health response \citep{anderson1991infectious,ferg2020}. The magnitude of this peak is directly related to clinical burden, workforce disruption, and pressure on health-care capacity, making it a natural measure of epidemic severity.
In practice, however, prevalence is often difficult to observe directly. Case reports and surveillance systems more naturally record incidence, while point prevalence is affected by under-ascertainment, reporting delays, variation in testing effort, and asymptomatic or mildly symptomatic infection. This motivates analytic methods that recover, approximate, or bound prevalence peaks from quantities that are more naturally connected to incidence and susceptible depletion.
The classical susceptible--infectious--removed (SIR) framework of Kermack and McKendrick remains the starting point for the study of epidemic prevalence \citep{kermack1927contribution,kermack1991contributions,hethcote2000mathematics,anderson1991infectious}. In this formulation, $S(t)$, $I(t)$, and $R(t)$ denote the fractions of the population that are susceptible, infectious, and removed (recovered or otherwise no longer infectious), respectively. The corresponding peak prevalence is
\[
I_{\max}=\max_{t\ge 0} I(t).
\]
For the standard model with an exponentially distributed infectious period, the prevalence peak occurs when the susceptible fraction crosses the threshold \(S=1/R_0\), and the corresponding peak prevalence can be expressed in terms of the classical epidemic invariant and final-size relation. This simple characterization relies heavily on the memoryless nature of the infectious period and is generally lost when more realistic infectious-period distributions are considered.
A common approach to incorporating non-exponential infectious periods is to partition the infectious class into multiple sequential stages. Such multi-stage or linear-chain models replace the memoryless infectious period by an Erlang distribution and provide a finite-dimensional approximation to age-of-infection and renewal formulations \citep{macdonald1978time,smith2011distributed,hurtado2019generalizations,champredon2018equivalence,diekmann2013mathematical,tverskoi2025flu}. 

Such multistage models are of interest not only because they provide more realistic infectious-period distributions, but also because they form a bridge between finite-dimensional compartmental models and age-of-infection or renewal formulations. As the number of infectious stages increases, the Erlang distribution becomes increasingly concentrated around its mean, and the resulting epidemic dynamics approach those of a delay-type epidemic model. This observation naturally motivates the study of the large-$k$ regime.

For any fixed finite number of stages $k$, the quantities like $I_{\max}$  can be computed directly from numerical solutions of the governing ODE system, and for the relatively small values of $k$ commonly used in applications, such calculations are routine. Consequently, our interest in the large-$k$ limit is not primarily computational. Rather, the asymptotic analysis reveals structural relationships that are difficult to discern from finite-dimensional calculations alone. In particular, it explains the emergence of simple approximations for peak prevalence, identifies the limiting relationship between prevalence and weighted incidence, helps to identify when simple approximations  are expected to be accurate, and leads to explicit correction formulas that remain useful even for moderate values of $k$.

The asymptotic perspective developed below is particularly useful for
understanding the relationship between prevalence and certain weighted
occupancy measures of the infectious population. To establish the framework
for the asymptotic analysis, we first introduce a general $k$-stage epidemic
model and define an associated weighted aggregate of infectious-stage
occupancies whose peak admits an explicit characterization. This
representation forms the basis for all subsequent asymptotic analysis. Specifically,  we study the deterministic $k$-staged epidemic model $SI(k)R$:

\begin{align}
\dot{S}^{(k)} &= -\beta S^{(k)} I^{(k)}, \nonumber\\
\dot{I}_1^{(k)} &= \beta S^{(k)} I^{(k)}-\delta I^{(k)}_1, \label{eq:SIkR}\\
\dot{I}_i^{(k)} &= \delta I^{(k)}_{i-1}-\delta I^{(k)}_i,
\qquad i=2,\dots,k, \nonumber\\
\dot{R}^{(k)} &= \delta I^{(k)}_k, \nonumber
\end{align}
where
\[
I^{(k)}(t)=\sum_{i=1}^k I^{(k)}_i(t).
\]
The initial conditions are
\begin{equation}
\label{eq:ic_SIkR}
S^{(k)}(0)=1-\varepsilon>0, \qquad
I^{(k)}_1(0)=\varepsilon>0, \qquad
I^{(k)}_i(0)=0,\ i=2,\dots,k, \qquad
R^{(k)}(0)=0.
\end{equation}
Although the prevalence \(I^{(k)}(t)\) is the primary epidemiological quantity of interest, a key role in our analysis is played by the auxiliary weighted stage aggregate
\begin{equation}
\label{eq:Vdef}
V^{(k)}(t)=\sum_{i=1}^{k}(k-i+1)I_i^{(k)}(t),
\end{equation}
which may be interpreted as the total remaining infectious-stage mass. Closely related weighted occupancy functionals have appeared previously in the analysis of multistage epidemic models. In particular, \citet{islier2020finalsize} employed \(V^{(k)}\) in a stochastic SI\((k)\)R model and observed that its maximum provides a useful proxy for the prevalence peak.

A key feature of \(V^{(k)}\) is the telescoping identity
\[
\frac{\dot V^{(k)}(t)}{k}
=
\bigg(\beta S^{(k)}(t)-\frac{\delta}{k}\bigg)I^{(k)}(t).
\]
Thus, whenever the maximizer \(t_V\) of \(V^{(k)}\) is interior and \(I^{(k)}(t_V)>0\), it is determined entirely by the susceptible trajectory:
\(
S^{(k)}(t_V)=\frac{\delta}{k\beta}.
\)
This relationship connects the weighted peak to susceptible depletion and reconstructed incidence, without requiring direct observation of the full infectious-stage distribution. Since $V^{(k)}$ grows linearly with the number of stages, it is convenient to work with the normalized quantity $W^{(k)}(t) = V^{(k)}(t)/k$. The corresponding peak value,
\be\label{eq:Wmax} 
W_{\max}^{(k)}=\max_{t\ge 0}\frac{V^{(k)}(t)}{k},
\ee
 is the main object of study in the paper and forms the basis for constructing approximations to the prevalence peak.


The main objective of this paper is to determine when the weighted peak $W_{\max}^{(k)}$ provides a reliable proxy for the prevalence peak $I_{\max}^{(k)}$ and how this relationship depends on the scaling of the stage progression rate. Previous work \citep{islier2020finalsize} heuristically proposed the approximation
\be
\label{eq:sc1}
I_{\max}^{(k)}\sim  2W_{\max}^{(k)}.
\ee
However, the mathematical basis for this approximation, its accuracy, and
the range of parameter regimes in which it is valid have remained unclear.

The analysis presented in this paper shows that when $\delta$ is held fixed
as $k\to\infty$ (the \emph{naive scaling}), the epidemic must drive
$S^{(k)}(t_V)$ to order $k^{-1}$, while the occupied infectious stages
remain concentrated near the beginning of the chain. As a result, the
weights defining $V^{(k)}$ are nearly constant over the relevant stages,
leading to the asymptotic relation
\be
\label{eq:sc2}
I_{\max}^{(k)} \sim W_{\max}^{(k)}.
\ee
Thus, under naive scaling, the factor-two approximation is not asymptotically valid.

A qualitatively different picture emerges under the \emph{Erlang scaling} $\delta=k\gamma$, which preserves the mean infectious period $\tau = 1/\gamma$. In this case the infectious population spreads across the entire stage structure, and the normalized process $W^{(k)}$ converges to a limiting weighted-incidence functional $W$. This limit provides a natural connection between the finite-dimensional multistage model and a delay (or renewal) formulation. In particular, prevalence becomes a moving average of incidence over the infectious period, whereas $W$ corresponds to a triangularly weighted average over the same interval. For large $k$, this representation naturally yields the approximation \eqref{eq:sc1}, whose accuracy we analyze and then systematically refine through higher-order corrections derived from the local geometry of the incidence curve near its peak.

The paper is structured as follows. In Section~\ref{sec:2}, we derive an explicit expression for $V^{(k)}_{\max}$, valid for arbitrary $k$, which serves as the starting point for the subsequent asymptotic analysis. We then examine two distinct scaling regimes. In Section~\ref{sec:2a}, we show that under the \emph{naive scaling}, the weighted peak is asymptotically equivalent to the prevalence peak. In Section~\ref{sec:3}, we consider the \emph{Erlang scaling} and establish a infinite-stage limit that links the multistage model to a delay epidemic formulation (Section~\ref{ssec:lim}). This limit justifies  the approximation $I_{\max}^{(k)}\approx 2W_{\max}^{(k)}$, introduced in Section~\ref{ssec:la}. We analyze the accuracy of this approximation systematically in Section~\ref{ssec:accuracy} and use the same limiting representation as the foundation for the refined approximations developed in Section~\ref{ssec:refined}. From an applications perspective, this section contains perhaps the paper's most practically relevant contributions, as it provides a hierarchy of increasingly accurate approximations based on local properties of the incidence curve. Section~\ref{sec:summary} concludes the paper with a summary of the main results and a discussion of their implications. Technical details and proofs omitted from the main text are collected in the Appendix.

\section{Preliminaries}\label{sec:2}
We begin by focusing on the weighted stage aggregate $V^{(k)}$. A key advantage of this quantity is that its dynamics admit a simple telescoping representation, allowing its extrema to be characterized directly in terms of the susceptible trajectory. The next result gives an explicit formula for $V^{(k)}_{\max}$ and identifies the susceptible level at which the maximum occurs.

\begin{proposition}\label{prop:1} Consider the ODE system \eqref{eq:SIkR} with  $k\ge 1$ and with the initial condition \eqref{eq:ic_SIkR}.
Assume that $1 - \varepsilon > \frac{\delta}{k \beta}$. Then, 
\be
\label{eq:Vmax}
V^{(k)}_{\max} = k-\frac{\delta}{\beta} - \frac{\delta}{\beta} \ln \bigg( \frac{\beta k(1-\varepsilon)}{\delta} \bigg).
\ee
\end{proposition}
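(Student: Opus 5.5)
The plan is to turn the telescoping identity into a first integral relating $V^{(k)}$ and $S^{(k)}$, and then read off the maximum. First I verify the identity
\[
\dot V^{(k)}=k\beta S^{(k)}I^{(k)}-\delta I^{(k)} .
\]
Differentiating $V^{(k)}=\sum_i (k-i+1)I_i^{(k)}$ and inserting \eqref{eq:SIkR}, the stage-$1$ inflow contributes $k\beta S^{(k)}I^{(k)}$. Each stage $i$ loses $\delta I_i^{(k)}$ with weight $k-i+1$ and passes it to stage $i+1$ with weight $k-i$; for $i=k$ it passes to $R^{(k)}$ with weight $0$. The net contribution of each $I_i^{(k)}$ is therefore $-\delta I_i^{(k)}$, and summing over $i$ gives $-\delta I^{(k)}$.

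Next I use $\dot S^{(k)}=-\beta S^{(k)}I^{(k)}$ to write $I^{(k)}=-\dot S^{(k)}/(\beta S^{(k)})$. Substituting,
\[
\dot V^{(k)}=-k\dot S^{(k)}+\frac{\delta}{\beta}\,\frac{\dot S^{(k)}}{S^{(k)}} ,
\]
so the quantity $V^{(k)}+kS^{(k)}-\frac{\delta}{\beta}\ln S^{(k)}$ is conserved. At $t=0$ we have $V^{(k)}(0)=k\varepsilon$ and $S^{(k)}(0)=1-\varepsilon$. This gives
\[
V^{(k)}(t)=k-kS^{(k)}(t)+\frac{\delta}{\beta}\ln\frac{S^{(k)}(t)}{1-\varepsilon}.
\]
Thus $V^{(k)}=g(S^{(k)})$, where $g(s)=k-ks+\frac{\delta}{\beta}\ln\frac{s}{1-\varepsilon}$ is strictly concave on $(0,\infty)$ with its unique maximizer at $s^*=\delta/(k\beta)$. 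A direct computation gives
\[
g(s^*)=k-\frac{\delta}{\beta}-\frac{\delta}{\beta}\ln\frac{k\beta(1-\varepsilon)}{\delta},
\]
which is \eqref{eq:Vmax}.

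It remains to show that $\sup_t V^{(k)}(t)=g(s^*)$, which requires that $S^{(k)}(t)$ actually attains the value $s^*$. I expect this to be the main obstacle. Positivity and invariance give $I_1^{(k)}(t)>0$ for all $t$, hence $I^{(k)}>0$ and $S^{(k)}$ is strictly decreasing from $1-\varepsilon>s^*$, using the hypothesis. So I must show $S^{(k)}_\infty:=\lim_{t\to\infty} S^{(k)}(t)<s^*$. Since $V^{(k)}\ge 0$ and $V^{(k)}\le kI^{(k)}$, I plan to argue that $I^{(k)}(t)\to 0$. This follows from $\int_0^\infty \delta I^{(k)}_k\,dt\le 1$ together with the linear chain structure, which forces each $\int_0^\infty I_i^{(k)}\,dt$ to be finite in succession. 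Then $V^{(k)}(t)\to 0$, so $g(S^{(k)}_\infty)=0$. Because $g(1-\varepsilon)=k\varepsilon>0$ and $g$ is concave, the zero of $g$ that is the limit of a trajectory decreasing from $1-\varepsilon$ must be the smaller root, and this root lies below $s^*$. Alternatively, if $S^{(k)}_\infty\ge s^*$, then $V^{(k)}$ would be nondecreasing and stay at least $k\varepsilon>0$, contradicting $V^{(k)}\to0$.

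Hence by continuity there is a time $t_V$ with $S^{(k)}(t_V)=s^*$. At that time $V^{(k)}(t_V)=g(s^*)$, while $V^{(k)}(t)\le g(s^*)$ for all $t$. This establishes \eqref{eq:Vmax}.
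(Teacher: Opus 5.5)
Your proof is correct and follows the paper's route: you use the telescoping identity $\dot V^{(k)}=(k\beta S^{(k)}-\delta)I^{(k)}$, the first integral expressing $V^{(k)}$ as a function of $S^{(k)}$, and evaluation at $S^{(k)}=\delta/(k\beta)$. Your extra argument that $S^{(k)}$ actually falls below $\delta/(k\beta)$ (via $V^{(k)}\le kI^{(k)}\to 0$, which also needs the boundedness of $\dot I_i^{(k)}$ alongside integrability of each $I_i^{(k)}$) fills a step the paper only asserts in the remark after its proof, where monotonicity of $S^{(k)}$ is invoked and by itself does not guarantee the crossing.
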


\begin{proof}
Differentiating~\eqref{eq:Vdef},
\[
\dot{V}^{(k)}(t)
=
\sum_{i=1}^k (k-i+1)\dot{I}^{(k)}_i(t).
\]
Using equations for $I_i$ in \eqref{eq:SIkR},
\be
\label{eq:Vprime}
\dot{V}^{(k)} = k \big( \beta S^{(k)}I^{(k)}-\delta I^{(k)}_1 \big)
+\sum_{i=2}^k (k-i+1)(\delta I^{(k)}_{i-1}-\delta I^{(k)}_i) = (k\beta S^{(k)}-\delta)I^{(k)}.
\ee
Hence the extremum of $V^{(k)}$ occurs at time $t_V^{(k)}$ satisfying
\be
\label{eq:Sv}
S^{(k)}(t_V^{(k)}) = \frac{\delta}{k\beta}.
\ee
From~\eqref{eq:SIkR} and~\eqref{eq:Vprime}, we get
\be
\frac{dV^{(k)}}{dS^{(k)}} = \frac{\dot{V}^{(k)}}{\dot{S}^{(k)}} =
-k+\frac{\delta}{\beta S^{(k)}}.
\label{eq:dVdS}
\ee
Integrating and using initial conditions $V^{(k)}(0)=k\varepsilon$ and $S^{(k)}(0)=1-\varepsilon$, we obtain:
\be
\label{eq:VS}
V^{(k)} \big(S^{(k)} \big) = k \big(1-S^{(k)} \big) -\frac{\delta}{\beta} \ln \bigg(\frac{1-\varepsilon}{S^{(k)}} \bigg).
\ee
Substituting \eqref{eq:Sv} into \eqref{eq:VS} gives \eqref{eq:Vmax}.
\end{proof}

Note that 
the assumed condition
\[
1-\varepsilon>\frac{\delta}{k\beta}
\]
ensures that the initial susceptible fraction exceeds the critical value
\(S^{(k)}=\delta/(k\beta)\) at which \(V^{(k)}\) attains an extremum. Since
\(S^{(k)}(t)\) decreases monotonically over time, this guarantees that the
trajectory crosses the threshold \(S^{(k)}=\delta/(k\beta)\) during the
epidemic, so that \(V^{(k)}\) possesses an interior maximum. Indeed, if
\(1-\varepsilon\le \delta/(k\beta)\), then \(\dot V^{(k)}(t)\le 0\) for all
\(t\ge0\), and the maximum is attained at the initial time,
\(V^{(k)}_{\max}=V^{(k)}(0)=k\varepsilon\).

\section{Naive scaling}\label{sec:2a}
In this section, we consider the \emph{naive scaling} assuming that $\delta$ is held fixed as $k\to\infty$. Under this regime, we show  that the weighted peak is asymptotically equivalent to the prevalence peak.

We first establish a lemma that provides lower and upper bounds for the time $t_V^{(k)}$ at which $V^{(k)}$ attains its maximum.

\begin{lemma}\label{lem:1} Under the assumptions of Proposition~\ref{prop:1},
 \be
\label{eq:tv}
\frac{1}{\beta} \ln\bigg[ \frac{(1-\varepsilon)\beta}{\delta} k \bigg] \leq t_V^{(k)} \leq \frac{1}{\varepsilon \beta} \ln\bigg[ \frac{(1-\varepsilon)\beta}{\delta} k \bigg]. 
\ee
\end{lemma}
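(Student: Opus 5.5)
We bound $t_V^{(k)}$ by integrating the logarithmic derivative of $S^{(k)}$ between $0$ and $t_V^{(k)}$. From \eqref{eq:SIkR},
\[
\frac{d}{dt}\ln S^{(k)}(t) = -\beta I^{(k)}(t).
\]
By \eqref{eq:Sv}, $S^{(k)}(t_V^{(k)})=\delta/(k\beta)$, so
\be
\label{eq:plan_int}
\ln\bigg[\frac{(1-\varepsilon)\beta k}{\delta}\bigg]=\beta\int_0^{t_V^{(k)}} I^{(k)}(s)\,ds .
\ee
The assumption $1-\varepsilon>\delta/(k\beta)$ makes the left side positive and guarantees that $t_V^{(k)}$ exists and is finite, by the remark after Proposition~\ref{prop:1}. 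It remains to bound $I^{(k)}$ from above and below on $[0,t_V^{(k)}]$.

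\emph{Upper bound on $I^{(k)}$, giving the lower bound on $t_V^{(k)}$.} Summing the equations in \eqref{eq:SIkR} gives $S^{(k)}+I^{(k)}+R^{(k)}\equiv 1$. All compartments are nonnegative because the system is quasi-positive, so $I^{(k)}(s)\le 1$. Hence \eqref{eq:plan_int} yields
\[
\ln\bigg[\frac{(1-\varepsilon)\beta k}{\delta}\bigg]\le \beta t_V^{(k)},
\]
which is the left inequality in \eqref{eq:tv}.

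\emph{Lower bound on $I^{(k)}$, giving the upper bound on $t_V^{(k)}$.} We need $I^{(k)}(s)\ge\varepsilon$ for $s\in[0,t_V^{(k)}]$. Since
\[
\dot I^{(k)}=\beta S^{(k)}I^{(k)}-\delta I^{(k)}_k,
\]
this is not immediate from $\dot I^{(k)}$ alone. Instead we use the weighted aggregate. By \eqref{eq:Vprime}, $\dot V^{(k)}\ge 0$ on $[0,t_V^{(k)}]$, since $S^{(k)}\ge \delta/(k\beta)$ there by monotonicity of $S^{(k)}$. Therefore
\[
V^{(k)}(s)\ge V^{(k)}(0)=k\varepsilon \quad\text{on } [0,t_V^{(k)}].
\]
The weights $k-i+1$ are at most $k$, so $V^{(k)}\le kI^{(k)}$, i.e. $I^{(k)}\ge W^{(k)}$. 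Combining, $I^{(k)}(s)\ge V^{(k)}(s)/k\ge\varepsilon$ on $[0,t_V^{(k)}]$. Inserting this into \eqref{eq:plan_int} gives
\[
\ln\bigg[\frac{(1-\varepsilon)\beta k}{\delta}\bigg]\ge \beta\varepsilon\, t_V^{(k)},
\]
which is the right inequality in \eqref{eq:tv}.

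The main obstacle is this lower bound on $I^{(k)}$, since prevalence itself need not be monotone before $t_V^{(k)}$. The comparison $I^{(k)}\ge V^{(k)}/k$, together with the monotonicity of $V^{(k)}$ up to its maximum, is the device that resolves it. The remaining steps are routine: nonnegativity of the solution and the fact that $t_V^{(k)}$ is the first, and only, crossing of the threshold $\delta/(k\beta)$ by the strictly decreasing $S^{(k)}$.
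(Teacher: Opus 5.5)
Your proof is correct and follows the paper's argument essentially step for step. For the lower bound you use $\ln[(1-\varepsilon)\beta k/\delta]=\beta\int_0^{t_V^{(k)}}I^{(k)}$ together with $I^{(k)}\le 1$, and for the upper bound you use $I^{(k)}\ge V^{(k)}/k\ge\varepsilon$ on $[0,t_V^{(k)}]$, which follows from $V^{(k)}$ being nondecreasing there. The only difference is cosmetic: you insert $I^{(k)}\ge\varepsilon$ directly into the integral identity, whereas the paper goes through the equivalent comparison $S^{(k)}(t)\le(1-\varepsilon)e^{-\beta\varepsilon t}$.
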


\begin{proof}
Since $\dot{S}^{(k)} = -\beta S^{(k)}I^{(k)}$, we have
\[
S^{(k)}(t) = (1-\varepsilon)e^{-\beta \int_{0}^{t} I^{(k)}(u) du}.
\]
Calculated at $t=t_V^{(k)}$ and noting that for all $t$, $I^{(k)}(t) \leq 1$, it gives
\[
\frac{1}{\beta} \ln\bigg[ \frac{(1-\varepsilon)\beta}{\delta} k \bigg] = \int_{0}^{t_V^{(k)}} I^{(k)}(u) du \leq t_V^{(k)}.
\]
This establishes the lower bound. For the upper bound, note that $V^{(k)}(t)$ is a non-decreasing function of $t$ if $t \leq t_V^{(k)}$. Therefore,
\[
\forall t \in [0,t_V^{(k)}]: k \varepsilon =V^{(k)}(0) \leq V^{(k)}(t).
\]
In addition, 
\[
V^{(k)}(t) = \sum_{i=1}^{k} (k-i+1) I^{(k)}_i(t) \leq k I^{(k)}(t).
\]
Together, it gives
\[
\forall t \in [0,t_V^{(k)}]: I^{(k)}(t) \geq \varepsilon.
\]
This also means that 
\[
\forall t \in [0,t_V^{(k)}]: \dot{S}^{(k)} = - \beta S^{(k)} I^{(k)} \leq - \beta \varepsilon S^{(k)},
\]
which gives
\[
\forall t \in [0,t_V^{(k)}]: S^{(k)}(t) \leq (1-\varepsilon) e^{-\beta \varepsilon t}.
\]
Evaluating this inequality at $t=t_V^{(k)}$, we obtain:
\[
\frac{\delta}{k \beta} \leq (1-\varepsilon) e^{-\beta \varepsilon t_V^{(k)}}\quad \text{yielding}\quad \text t_V^{(k)} \leq \frac{1}{\varepsilon \beta} \ln\bigg[ \frac{(1-\varepsilon)\beta}{\delta} k \bigg].
\]
\end{proof}
Next, using the upper bound for $t_V^{(k)}$ from Lemma~\ref{lem:1}, we establish a lower bound for $I^{(k)}_{\max}$.

\begin{lemma}\label{lem:2} 
Under the assumptions of Proposition~\ref{prop:1}, 
\be
\label{eq:Im}
I^{(k)}_{\max} \geq 1 - \frac{\delta}{\beta} \cdot \frac{1}{k} - \frac{\delta^k}{k!} \bigg[ \frac{1}{\varepsilon \beta} \ln \bigg( \frac{(1-\varepsilon)\beta}{\delta} k \bigg) \bigg]^k.
\ee
\end{lemma}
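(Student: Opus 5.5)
Evaluating $I^{(k)}$ at the single time $t_V^{(k)}$ suffices, since $I^{(k)}_{\max}\ge I^{(k)}(t_V^{(k)})$. The idea is to compare $I^{(k)}$ with $V^{(k)}/k$ and then bound the gap by the mass that has already reached the last stages of the chain.

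First, write
\[
I^{(k)}(t)-\frac{V^{(k)}(t)}{k}=\sum_{i=1}^k \frac{i-1}{k}\,I_i^{(k)}(t)\ \ge 0 .
\]
Hence $I^{(k)}(t_V^{(k)})\ge W^{(k)}_{\max}$. By Proposition~\ref{prop:1},
\[
W^{(k)}_{\max}=1-\frac{\delta}{\beta k}-\frac{\delta}{\beta k}\ln\!\Big(\frac{\beta k(1-\varepsilon)}{\delta}\Big).
\]
This already yields the term $1-\delta/(\beta k)$. However, the logarithmic term is not of the form in \eqref{eq:Im}, so this route alone does not produce the stated bound.

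I would therefore use a different decomposition: conservation of mass. Since $S^{(k)}+I^{(k)}+R^{(k)}=1$,
\[
I^{(k)}(t_V^{(k)})=1-\frac{\delta}{k\beta}-R^{(k)}(t_V^{(k)}),
\]
using $S^{(k)}(t_V^{(k)})=\delta/(k\beta)$ from \eqref{eq:Sv}. It then remains to show
\[
R^{(k)}(t)\le \frac{(\delta t)^k}{k!}
\]
and to apply it at $t=t_V^{(k)}$ with the upper bound from Lemma~\ref{lem:1}. Because $(\delta t)^k/k!$ is increasing in $t$, substituting the upper bound preserves the inequality and gives exactly \eqref{eq:Im}.

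The main step is the bound on $R^{(k)}$. The stage equations are linear in $(I_1,\dots,I_k)$ with input $\beta S^{(k)}I^{(k)}$, and each of $I_1,\dots,I_k$ is nonnegative. I would bound $R^{(k)}$ by iterated integration. Since $I_1^{(k)}\le 1$, we get $I_2^{(k)}(t)=\int_0^t\delta e^{-\delta(t-s)}I_1^{(k)}(s)\,ds\le \delta t$, and similarly $I_i^{(k)}(t)\le \delta \int_0^t I_{i-1}^{(k)}(s)\,ds$. Inductively this gives $I_i^{(k)}(t)\le (\delta t)^{i-1}/(i-1)!$, and then
\[
R^{(k)}(t)=\delta\int_0^t I_k^{(k)}(s)\,ds\le \frac{(\delta t)^k}{k!}.
\]
Each step only uses nonnegativity of the states and dropping the factor $e^{-\delta(t-s)}\le 1$. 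The base case is the crude bound $I_1^{(k)}\le 1$, which is valid because all compartments are nonnegative fractions summing to one.

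I expect the only delicate point to be justifying these comparison inequalities. This amounts to checking positivity of solutions, which is standard for this cooperative linear chain with nonnegative input. No sharper estimate is needed, since the target bound is itself this crude Poisson-type tail.
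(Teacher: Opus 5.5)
Your proposal is correct and follows essentially the same route as the paper: evaluate at $t_V^{(k)}$, use $S^{(k)}+I^{(k)}+R^{(k)}=1$ with $S^{(k)}(t_V^{(k)})=\delta/(k\beta)$, and bound $I_i^{(k)}(t)\le(\delta t)^{i-1}/(i-1)!$ by iterated integration to get $R^{(k)}(t)\le(\delta t)^k/k!$. You then insert the upper bound on $t_V^{(k)}$ from Lemma~\ref{lem:1} using monotonicity in $t$, exactly as the paper does; your preliminary observation $I^{(k)}(t_V^{(k)})\ge W^{(k)}_{\max}$ is also correct but, as you recognized, only gives a weaker logarithmic correction and is not needed.
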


\begin{proof}
We can derive the following estimates:
\[
I^{(k)}_1(t) \leq 1,
\]
\[
\dot{I}^{(k)}_2(t) = \delta I^{(k)}_1 - \delta I^{(k)}_2 \leq \delta I_1^{(k)}(t) \leq \delta \quad\text{implying}\quad I^{(k)}_2(t) \leq \delta t,
\]
\[
I^{(k)}_3(t) \leq \frac{(\delta t)^2}{2},..., I^{(k)}_k(t) \leq \frac{(\delta t)^{k-1}}{(k-1)!}.
\]
Therefore,
\[
R^{(k)}(t) = \delta \int_{0}^{t} I^{(k)}_k(\tau) d\tau \leq \frac{(\delta t)^k}{k!}.
\]
As a result,
\[
I^{(k)}_{\max} \geq I^{(k)}(t_V^{(k)}) = 1 - S^{(k)}(t_V^{(k)}) - R^{(k)}(t_V^{(k)}) \geq 1 - \frac{\delta}{\beta} \cdot \frac{1}{k} - \frac{(\delta t_V^{(k)})^k}{k!}.
\]
Applying Lemma~\ref{lem:1}, we obtain the lower bound for $I_{\max}$.
\end{proof}

Finally, using the lower bound for $I^{(k)}_{\max}$ from Lemma~\ref{lem:2} together with the explicit formula for $V^{(k)}_{\max}$ in~\eqref{eq:Vmax}, we characterize the asymptotic behavior of $\frac{W^{(k)}_{\max}}{I^{(k)}_{\max}} = \frac{V^{(k)}_{\max}}{k I^{(k)}_{\max}}$.

\begin{theorem}
Under the assumptions of Proposition~\ref{prop:1}, 
\[
\frac{W^{(k)}_{\max}}{I^{(k)}_{\max}} = 1 - \frac{\delta}{\beta} \cdot \frac{\ln(k)}{k} + O(1/k), \text{ } k \to \infty.
\]
\end{theorem}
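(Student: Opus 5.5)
The plan is to sandwich $I^{(k)}_{\max}$ between $1-O(1/k)$ and $1$, and then divide the explicit formula for $V^{(k)}_{\max}/k$ from Proposition~\ref{prop:1} by it.

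\textbf{The weighted peak.} Dividing \eqref{eq:Vmax} by $k$ gives
\[
W^{(k)}_{\max}=1-\frac{\delta}{\beta}\cdot\frac{\ln k}{k}-\frac{\delta}{\beta k}\Big(1+\ln\frac{\beta(1-\varepsilon)}{\delta}\Big).
\]
With $\delta,\beta,\varepsilon$ fixed, this is $1-\frac{\delta}{\beta}\frac{\ln k}{k}+O(1/k)$.

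\textbf{The prevalence peak.} Since $S^{(k)}+I^{(k)}+R^{(k)}=1$ is conserved and all components are nonnegative, we have the trivial upper bound $I^{(k)}_{\max}\le 1$. For the lower bound we use Lemma~\ref{lem:2}:
\[
I^{(k)}_{\max}\ge 1-\frac{\delta}{\beta k}-\frac{\delta^k}{k!}L_k^k,
\qquad
L_k=\frac{1}{\varepsilon\beta}\ln\!\Big(\frac{(1-\varepsilon)\beta k}{\delta}\Big)=O(\ln k).
\]
The main step is to show that the remainder $\delta^kL_k^k/k!$ is $o(1/k)$; in fact it decays superexponentially. By Stirling, $k!\ge (k/e)^k$, so
\[
\frac{(\delta L_k)^k}{k!}\le\Big(\frac{e\,\delta L_k}{k}\Big)^k .
\]
Since $e\delta L_k/k=O(\ln k/k)\to0$, the base is below $1/2$ for $k$ large. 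Hence the term is at most $2^{-k}=o(1/k)$, and therefore $I^{(k)}_{\max}=1+O(1/k)$.

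\textbf{Combining.} Write $I^{(k)}_{\max}=1-\eta_k$ with $0\le\eta_k=O(1/k)$. Then $1/I^{(k)}_{\max}=1+O(1/k)$, and
\[
\frac{W^{(k)}_{\max}}{I^{(k)}_{\max}}=\Big(1-\frac{\delta}{\beta}\frac{\ln k}{k}+O(1/k)\Big)\big(1+O(1/k)\big)=1-\frac{\delta}{\beta}\frac{\ln k}{k}+O(1/k).
\]
The cross term is $O(\ln k/k^2)$, which is absorbed into $O(1/k)$.

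The only genuine obstacle is the Stirling estimate for the remainder in Lemma~\ref{lem:2}. Everything else is direct substitution into Proposition~\ref{prop:1} plus the trivial bound $I^{(k)}_{\max}\le1$. Note also that the hypothesis of Proposition~\ref{prop:1} holds automatically for all large $k$ when $\delta$ is fixed, so the theorem's assumptions are consistent along the whole sequence.
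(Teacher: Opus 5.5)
Your proof is correct and follows the paper's argument: it pairs the trivial bound $I^{(k)}_{\max}\le 1$ and the lower bound of Lemma~\ref{lem:2} with the explicit formula for $V^{(k)}_{\max}$ from Proposition~\ref{prop:1}. Your Stirling estimate $(\delta L_k)^k/k!\le (e\delta L_k/k)^k\le 2^{-k}$ also supplies the justification that the remainder $Z(k)$ is $o(1/k)$, which the paper only asserts.
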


\begin{proof}
First note that $I^{(k)}_{\max} \leq 1$. Therefore,
\[
\frac{W^{(k)}_{\max}}{I^{(k)}_{\max}} \geq W^{(k)}_{\max} \geq 1 - \frac{\delta}{\beta} \cdot \frac{1}{k} - \frac{\delta}{\beta} \cdot \frac{1}{k} \ln\bigg[ \frac{(1-\varepsilon)\beta}{\delta} k \bigg].
\]
On the other hand, assume that $k$ is sufficiently large so that $1-\frac{\delta}{\beta} \cdot \frac{1}{k} - Z(k) > 0$, where
\[
Z(k) = \frac{1}{k!} \bigg[ \frac{\delta}{\varepsilon \beta} \ln \bigg( \frac{(1-\varepsilon)\beta}{\delta} k \bigg) \bigg]^k = o(1/k).
\]
Then, employing Lemma~\ref{lem:2} we obtain:
\[
\frac{W^{(k)}_{\max}}{I^{(k)}_{\max}} \leq \frac{W^{(k)}_{\max}}{1-\frac{\delta}{\beta} \cdot \frac{1}{k} - Z(k)} = W^{(k)}_{\max} \bigg( 1 + \frac{\delta}{\beta} \cdot \frac{1}{k} + o(1/k) \bigg) = 1 - \frac{\delta}{\beta} \cdot \frac{1}{k} \ln\bigg[ \frac{(1-\varepsilon)\beta}{\delta} k \bigg]  + o(1/k).
\]
As a result, for a sufficiently large $k$,
\[
- \frac{\delta}{\beta} \cdot \frac{1}{k} \bigg(1+ \ln\bigg[ \frac{(1-\varepsilon)\beta}{\delta} \bigg] \bigg) \leq \frac{W^{(k)}_{\max}}{I^{(k)}_{\max}} - 1 +  \frac{\delta}{\beta} \cdot \frac{\ln(k)}{k} \leq o(1/k) - \frac{\delta}{\beta} \cdot \frac{1}{k} \ln\bigg[ \frac{(1-\varepsilon)\beta}{\delta} \bigg].
\]
\end{proof}

Thus, under the naive scaling the weighted peak $W_{\max}^{(k)}$ does provide a reliable proxy for the prevalence peak, but with the asymptotic proportionality constant equal to 1, not 2. This shows that the heuristic approximation $I_{\max}^{(k)}\approx 2W_{\max}^{(k)}$ cannot be justified by simply taking $k\to\infty$ with $\delta$ fixed. The factor of two must instead be tied to the Erlang scaling considered in the next section.

\section{Erlang scaling}\label{sec:3} 
In this section, we consider the \emph{Erlang scaling} $\delta=k\gamma$ with $k\to\infty$ and $R_0=\beta/\gamma$. This scaling preserves the mean infectious period $\tau = 1/\gamma$ as the number of stages increases. We first establish a infinite-stage limit that links the multistage model to a delay epidemic formulation. Using this limit, we justify the approximation $I_{\max}^{(k)}\approx 2W_{\max}^{(k)}$. We then analyze the accuracy of this approximation and use the same limiting representation as the basis for the refined approximations.

\subsection{Infinite-stage limit}
\label{ssec:lim}
Here, we show that the trajectory $W^{(k)}$ converges as $k \to \infty$, to a limiting weighted-incidence functional $W$. This limit provides a natural connection between the multistage model~\eqref{eq:SIkR} and a delay formulation. Specifically, we show that prevalence becomes a moving average of incidence over the infectious period, whereas $W$ corresponds to a triangularly weighted average over the same interval.

\begin{theorem}[Infinite-stage limit of the $SI(k)R$ infectious profile]
\label{thm:1} Consider the ODE system \eqref{eq:SIkR} with the initial conditions \eqref{eq:ic_SIkR}.
Let $A^{(k)}(t)=-\dot{S}^{(k)}$ denote the incidence of entering  the first infectious stage. Suppose that, as $k\to \infty$,
$A^{(k)}\to A$ locally uniformly on $[0,\infty)$, where $A$ is continuous. Fix  $x\in(0,1)$ and set $i_k=\lfloor kx\rfloor$. If
$t>\tau x$, then
\[
k I_{i_k}^{(k)}(t)
\longrightarrow
\tau A (t- \tau x).
\]
If $t< \tau x$, then
\[
k I_{i_k}^{(k)}(t)\longrightarrow 0.
\]
The initial mass $\varepsilon$ is transported along the characteristic
$x=t / \tau$  in the weak convergence  sense:
\[
k I_{\lfloor kx\rfloor}^{(k)}(t)\,dx
\Longrightarrow
\varepsilon \delta_{t/ \tau}(dx)
+
\tau A (t- \tau x)
\mathbf 1_{\{0<x<\min(1,t / \tau)\}}\,dx .
\]
Moreover, for every continuity point of the limiting expressions,
\[
I^{(k)}(t)\longrightarrow
I(t)
=
\varepsilon \mathbf 1_{\{t<\tau\}}
+
\int_{(t-\tau)_+}^{t} A(u)\,du,
\]
and
\[
W^{(k)}(t)
\longrightarrow
W(t)
=
\varepsilon \bigg(1- \frac{t}{\tau} \bigg)_+
+
\int_{(t-\tau)_+}^{t}
\bigg(1- \frac{t-u}{\tau} \bigg) A(u)\,du .
\]
\end{theorem}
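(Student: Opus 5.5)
The plan is to solve the linear chain explicitly by variation of constants and then take the limit using Erlang concentration. With $\delta=k\gamma=k/\tau$, the stage equations in \eqref{eq:SIkR} form a linear cascade driven by the input $A^{(k)}(u)=\beta S^{(k)}(u)I^{(k)}(u)$ into stage 1, together with the initial mass $\varepsilon$ in stage 1. Induction on $i$ gives the closed form
\[
I_i^{(k)}(t)=\varepsilon\,\frac{(\delta t)^{i-1}}{(i-1)!}e^{-\delta t}+\int_0^t \frac{\delta^{i-1}(t-u)^{i-1}}{(i-1)!}e^{-\delta(t-u)}A^{(k)}(u)\,du .
\]
Multiplying by $k=\tau\delta$, the first term becomes $\tau\varepsilon\, g_{i,\delta}(t)$ and the second becomes $\tau\int_0^t g_{i,\delta}(t-u)A^{(k)}(u)\,du$. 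Here $g_{i,\delta}$ is the Erlang$(i,\delta)$ density, with mean $i/\delta=i\tau/k\to\tau x$ and variance $i/\delta^2\approx \tau^2 x/k\to0$ when $i=i_k=\lfloor kx\rfloor$.

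For the pointwise statements, fix $t$ and write $\int_0^t g_{i_k,\delta}(t-u)A^{(k)}(u)\,du=\int_0^t g_{i_k,\delta}(s)A^{(k)}(t-s)\,ds$. I would compare this with $A(t-s)$ using local uniform convergence on $[0,t]$, which reduces the problem to $\int_0^t g_{i_k,\delta}(s)A(t-s)\,ds$. By Chebyshev the Erlang law concentrates at $s=\tau x$, and $A$ is continuous and bounded on $[0,t]$. So if $t>\tau x$ the integral tends to $A(t-\tau x)$. If $t<\tau x$ the mass lying in $[0,t]$ tends to $0$, so the integral tends to $0$. The initial term $\tau\varepsilon g_{i_k,\delta}(t)$ also vanishes for $t\ne\tau x$: a Stirling or Chernoff bound gives exponential decay in $k$ away from the mean, so this holds even though the density peak is of height $O(\sqrt k)$.

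The weak convergence statement follows by integrating against a test function $\varphi(x)$. Since $kI_{\lfloor kx\rfloor}^{(k)}(t)$ is piecewise constant in $x$, we have $\int_0^1\varphi(x)kI_{\lfloor kx\rfloor}^{(k)}(t)\,dx\approx\sum_i\varphi(i/k)I_i^{(k)}(t)$, with the index shift of order $1/k$ absorbed by uniform continuity of $\varphi$. Substituting the closed form gives two pieces.
\begin{itemize}
\item The initial piece is $\varepsilon\sum_i\varphi(i/k)\,\mathrm{Poisson}(\delta t)(i-1)$. Since $\delta t/k=t/\tau$, Poisson concentration makes this tend to $\varepsilon\varphi(t/\tau)$, and the truncation $i\le k$ is harmless when $t/\tau<1$.
\item The incidence piece is $\int_0^t\sum_i\varphi(i/k)\,\mathrm{Poisson}(\delta(t-u))(i-1)\,A^{(k)}(u)\,du$. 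The Poisson sum tends to $\varphi((t-u)/\tau)\mathbf 1_{\{(t-u)/\tau<1\}}$ for $(t-u)/\tau\neq1$. Dominated convergence then gives $\int\varphi((t-u)/\tau)A(t-u)\ldots\,du$, and the change of variables $x=(t-u)/\tau$ yields $\tau\int\varphi(x)A(t-\tau x)\mathbf 1_{\{0<x<\min(1,t/\tau)\}}\,dx$.
\end{itemize}
The statements for $I^{(k)}$ and $W^{(k)}$ are then the special cases $\varphi\equiv1$ and $\varphi(x)=1-x+1/k$, since $W^{(k)}=\sum_i(1-(i-1)/k)I_i^{(k)}$. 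The limits follow from these test functions, after substituting $u=t-\tau x$.

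The main obstacle is the boundary at $x=1$, equivalently $t-u=\tau$. There the Poisson tail sum $\sum_{i\le k}$ converges to $1/2$ rather than to an indicator, because the mass straddles the cutoff. This is why the statement is restricted to continuity points. For the incidence integral the set $\{u=t-\tau\}$ has measure zero, so dominated convergence still applies with bound $\|A^{(k)}\|_{\infty,[0,t]}$. For the initial-mass term the problem arises only at $t=\tau$, which is excluded by the continuity-point hypothesis. I would handle both uniformly with the Chernoff bound $P(|\mathrm{Poi}(\lambda k)-\lambda k|>\eta k)\le e^{-c(\eta)k}$.
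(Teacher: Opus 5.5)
Your proposal is correct and follows essentially the same route as the paper. Both arguments rest on the explicit cohort formula $I_i^{(k)}(t)=\varepsilon p_{k,i}(t)+\int_0^t A^{(k)}(u)p_{k,i}(t-u)\,du$ with Poisson/Erlang kernels, on concentration of that kernel at $r=\tau x$ (you use Chebyshev/Chernoff where the paper uses Stirling), and on the Poisson law of large numbers with bounded or dominated convergence for the initial cohort, $I^{(k)}$ and $W^{(k)}$.

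Treating $I^{(k)}$ and $W^{(k)}$ as the test functions $\varphi\equiv 1$ and $\varphi=1-x+1/k$ is only an organizational difference. You also spell out the incidence part of the weak limit, which the paper leaves implicit.

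Two small fixes. In your dominated-convergence step the integrand should read $A(u)$, not $A(t-u)$, for the substitution $x=(t-u)/\tau$ to give $A(t-\tau x)$. And $t=\tau$ is a continuity point of $W$, so it is not excluded there; it is still covered because the weight vanishes at $x=1$.
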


\begin{proof} By direct verification, or alternatively by the method of variation of parameters applied to \eqref{eq:SIkR} and \eqref{eq:ic_SIkR}, we obtain

\[
I_i^{(k)}(t)
=
\varepsilon e^{- \frac{k t}{\tau}} \frac{(k t)^{i-1}}{\tau^{i-1} (i-1)!}
+
\int_0^t
A^{(k)}(u)
e^{-\frac{k (t-u)}{\tau}}
\frac{(k (t-u))^{i-1}}{\tau^{i-1} (i-1)!}\,du .
\] 
Let
\[
p_{k,i}(s)
=
e^{- \frac{k s}{\tau}} \frac{(k s)^{i-1}}{\tau^{i-1} (i-1)!}
\]
This is the Poisson probability
\[
p_{k,i}(s)=\mathbb P\{N_{k s / \tau}=i-1\}.
\]
Thus
\[
I_i^{(k)}(t)
=
\varepsilon p_{k,i}(t)
+
\int_0^t A^{(k)}(u)p_{k,i}(t-u)\,du .
\]
Consider  $i=i_k=\lfloor kx\rfloor$.
For the integral term, write $r=t-u$. Then
\[
k\int_0^t A^{(k)}(u)p_{k,i_k}(t-u)\,du
=
k\int_0^t A^{(k)}(t-r)p_{k,i_k}(r)\,dr .
\]
The kernel $K_k(r):=k p_{k,i_k}(r)$ has total mass
\[
\int_0^\infty K_k(r)\,dr
=
k\int_0^\infty e^{-\frac{kr}{\tau}}
\frac{(k r)^{i_k-1}}{\tau^{i_k-1}(i_k-1)!}\,dr
=
\tau.
\]
Moreover, by Stirling's formula,
\[
(i_k-1)!
\sim
\sqrt{2\pi i_k}\left(\frac{i_k}{e}\right)^{i_k},
\]
so for $r>0$,
\[
p_{k,i_k}(r)
\sim
\frac{1}{\sqrt{2\pi kx}}
\exp\left\{
-k\left[
x\ln\left(\frac{\tau x}{r}\right)-x+ \frac{r}{\tau}
\right]
\right\}.
\]
The expression in the bracket is nonnegative and vanishes only at
\(
r=\tau x.
\)
Therefore $K_k(r)$ concentrates at $r=\tau x$ and has total mass
$\tau$. Hence, for continuous bounded $f$,
\[
\int_0^\infty f(r)K_k(r)\,dr
\to
\tau f(\tau x).
\]
Applying this with $f(r)=A(t-r)\mathbf 1_{\{0<r<t\}}$ gives
\[
k\int_0^t A^{(k)}(t-r)p_{k,i_k}(r)\,dr
\to
\tau 
A(t-\tau x)
\mathbf 1_{\{\tau x<t\}}.
\]
The initial term is $k\varepsilon p_{k,i_k}(t)$. By the same Stirling expansion, this converges to zero whenever
$x\neq \frac{t}{\tau}$, since then the bracket expression above  is strictly positive.
At $x=\frac{t}{\tau}$ the pointwise expression is of order $\sqrt{k}$,
reflecting concentration of the initial cohort. In weak form this term
converges to $\varepsilon\delta_{t / \tau}(dx)$ (see Appendix~\ref{app:wc}). This proves the stated infinite-stage profile.

Next consider the total prevalence. Since
\[
\sum_{i=1}^k p_{k,i}(s)
=
\mathbb P\{N_{k s / \tau}\le k-1\},
\]
we have
\[
I^{(k)}(t)
=
\varepsilon \mathbb P\{N_{k t / \tau}\le k-1\}
+
\int_0^t
A^{(k)}(u)
\mathbb P\{N_{k(t-u) / \tau}\le k-1\}\,du .
\]
By the law of large numbers for Poisson variables (see, for instance, \cite{Ross2014}),
\[
\frac{N_{k s / \tau}}{k}\to \frac{s}{\tau}
\]
in probability. Hence
\[
\mathbb P\{N_{k s / \tau}\le k-1\}
\to
\mathbf 1_{\{s / \tau <1\}}
\]
at continuity points. Since 
$A^{(k)}$ is locally bounded, so dominated convergence (i.e. uniform integrability of $\frac{N_{k s / \tau}}{k}$)  gives
\[
I^{(k)}(t)\to
\varepsilon\mathbf 1_{\{t<\tau\}}
+
\int_0^t
A(u)\mathbf 1_{\big\{\frac{t-u}{\tau}<1\big\}}\,du
=
\varepsilon\mathbf 1_{\{t<\tau\}}
+
\int_{(t-\tau)_+}^{t}A(u)\,du
=
I(t).
\]

Finally, using the cohort representation of \(I_i^{(k)}(t)\), we may write
\begin{align*}
W^{(k)}(t)
&=
\sum_{i=1}^k
\left(1-\frac{i-1}{k}\right)I_i^{(k)}(t)
\\
&=\varepsilon \sum_{i=1}^k
\left(1-\frac{i-1}{k}\right)p_{k,i}(t)
+
\int_0^t A^{(k)}(u)
\sum_{i=1}^k
\left(1-\frac{i-1}{k}\right)p_{k,i}(t-u)\,du.
\end{align*}
For any \(s\geq 0\),
\[
\sum_{i=1}^k
\left(1-\frac{i-1}{k}\right)p_{k,i}(s)
=
\mathbb E\left[
\left(1-\frac{N_{k s / \tau}}{k}\right)_+
\right].
\]
Indeed, the event \(N_{k s / \tau}=i-1\) corresponds to an individual
being in stage \(i\), and the positive part accounts for the fact that
only stages \(1,\ldots,k\) contribute to \(V^{(k)}\).
Since \(\frac{N_{k s / \tau}}{k} \to \frac{s}{\tau} \) in probability and
\(x\mapsto(1-x)_+\) is bounded and continuous, bounded convergence
implies
\[
\sum_{i=1}^k
\left(1-\frac{i-1}{k}\right)p_{k,i}(s)
\to
\bigg( 1- \frac{s}{\tau} \bigg)_+ .
\]
Applying the preceding limit with \(s=t\), the initial cohort contributes
$\varepsilon \big(1 - \frac{t}{\tau} \big)_+$ to the limit. For individuals infected at time \(u\), their infection age at time \(t\)
is \(s=t-u\). Hence the same argument gives
\[
\sum_{i=1}^k
\left(1-\frac{i-1}{k}\right)p_{k,i}(t-u)
\to
\bigg( 1- \frac{t-u}{\tau} \bigg)_+ .
\]
Assuming \(A^{(k)}\to A\) pointwise, with the sequence \(A^{(k)}\)
locally dominated by an integrable bound, dominated convergence yields
\[
\int_0^t A^{(k)}(u)
\sum_{i=1}^k
\left(1-\frac{i-1}{k}\right)p_{k,i}(t-u)\,du
\to
\int_0^t A(u) \bigg( 1- \frac{t-u}{\tau} \bigg)_+ \,du .
\]
Therefore,
\[
W^{(k)}(t)
\to
W(t)
=
\varepsilon \bigg( 1 - \frac{t}{\tau} \bigg)_+
+
\int_0^t A(u) \bigg( 1- \frac{t-u}{\tau} \bigg)_+ \,du .
\]
\end{proof}

Theorem~\ref{thm:1} provides a direct link between the stage-structured Erlang model and its limiting equation, showing that both $I$ and $W$ are determined by the same underlying incidence process, albeit through different weighting mechanisms. Specifically in the Erlang scaling limit, prevalence and the weighted stage aggregate correspond to two different averages of the same incidence trajectory over the infectious period: a uniform average for prevalence $I$ and a triangular average for $W$.

\begin{remark}\label{rem:1} Observe that since we expect that under mild regularity conditions  (see, for instance, \cite{diekmann2013mathematical}) the assumption \(A^{(k)}\to A\) holds with   \(A(u)=-\dot S(u)\), for some function $S$, we have  
\[
I(t)
=
\varepsilon\mathbf 1_{\{t<\tau\}}
+
\int_{(t-\tau)_+}^{t}A(u)\,du
=
\varepsilon\mathbf 1_{\{t<\tau\}}
+
S\!\left((t-\tau)_+\right)-S(t).
\]
Therefore, as $k \to \infty$ and \(A^{(k)}\to A\), the ODE system~\eqref{eq:SIkR} transforms into the following delay differential equation:

\be
\label{eq:DDE1}
\dot{S}(t) = -\beta S(t) I(t),
\ee
where
\be
\label{eq:DDE2}
I(t) = \varepsilon \mathbf{1}_{\{t<\tau\}} + S\big( (t-\tau)^+ \big) - S(t).
\ee
For simplicity, throughout the remainder of the paper we will assume the following delay initial conditions, which are compatible with \eqref{eq:ic_SIkR}:
\be
\label{eq:DDE3}
S(t) = 1 - \varepsilon, \quad \forall t \in [-\tau, 0].
\ee
\end{remark}

\begin{remark}\label{rem:2}
Note also that in view of \eqref{eq:VS},
\[
W(t) = 1 - S(t) - \frac{1}{R_0} \ln\bigg( \frac{1-\varepsilon}{S(t)} \bigg)
\]
and $W$ reaches its unique maximum when $S = \frac{1}{R_0}$, so that
\be
W_{\max} =\max_{t>0} W(t)= 1 - \frac{1}{R_0} - \frac{\ln \big( (1-\varepsilon) R_0 \big)}{R_0}.
\ee
\end{remark}
Defining
\[
I_{\max}=\max_{t>0} I(t),
\]
we may now ask how this quantity relates to $W_{\max}$. We address this question in the next section.

\subsection{The  approximation $I_{\max}^{(k)}\approx 2W_{\max}^{(k)}$}
\label{ssec:la}
We now use the result on infinite-stage limit established in Theorem~\ref{thm:1} to explain the approximation
\be
\label{eq:sc3}
I_{\max} \sim 2W_{\max}.
\ee
Indeed, Theorem~\ref{thm:1} shows that the solution $I(t)$ of the limiting delay system \eqref{eq:DDE1}--\eqref{eq:DDE2} admits a representation as an unweighted moving average of the incidence curve over the infectious period, whereas $W(t)$ corresponds to the same average computed with a triangular weighting kernel. More specifically,  
\begin{equation}\label{eq:int_rep}
I(t)=\int_{t-\tau}^{t}A(u)\,du,
\qquad
W(t)=\int_{t-\tau}^{t}
\left(1-\frac{t-u}{\tau}\right)A(u)\,du.
\end{equation}
This observation suggests comparing the two quantities through their local
behavior near the peak of the incidence curve. To this end, we may  apply a local
Laplace approximation \cite{deBruijn1981} around the incidence maximum. As it turns out, 
when the incidence profile is sufficiently broad near its peak, the leading-order
contribution to $I_{\max}$ is twice that of $W_{\max}$, yielding the
approximation~\eqref{eq:sc3}.

To show this, we first establish an important result concerning the relative timing of the peaks of $W$ and $I$. Let us define 
$$t_W = \argmax_{t>0} W(t)\quad\text{and}\quad t_I = \argmax_{t>0} I(t).$$

\begin{lemma}
\label{prop:2}
Assume $A(t)>0$ is strictly unimodal with unique maximizer $t^*$.
Then
\[
t_I-\tau<t^*\leq t_W<t_I.
\]
\end{lemma}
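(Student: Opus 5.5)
The plan is to work entirely with the integral representation \eqref{eq:int_rep}, where $A$ is extended by zero to negative times, and to differentiate it. This gives
\[
\dot I(t)=A(t)-A(t-\tau),\qquad
\dot W(t)=A(t)-\frac1\tau\int_{t-\tau}^{t}A(u)\,du=A(t)-\frac{I(t)}{\tau}.
\]
So $t_I$ is the point where $A(t)=A(t-\tau)$, and $t_W$ is the point where $A(t)$ equals its average over the window $[t-\tau,t]$. Every inequality in the statement will then follow from strict unimodality of $A$.

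\textbf{Step 1: locating $t_I$.} I will study $g(t)=A(t)-A(t-\tau)$ on three ranges.
\begin{itemize}
\item If $t\le t^*$, then $A$ is strictly increasing on $[t-\tau,t]$, so $g(t)>0$.
\item If $t-\tau\ge t^*$, then $A$ is strictly decreasing there, so $g(t)<0$.
\item If $t-\tau<t^*<t$, then $A(t)$ is strictly decreasing in $t$ while $A(t-\tau)$ is strictly increasing, so $g$ is strictly decreasing.
\end{itemize}
Hence $g$ has a unique zero, and it lies in $(t^*,t^*+\tau)$. Therefore $I$ increases strictly before this zero and decreases strictly after it. So this zero is $t_I$, and $t_I-\tau<t^*<t_I$.

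\textbf{Step 2: $t^*\le t_W$.} For $t<t^*$, $A$ is strictly increasing on $[t-\tau,t]$. So the window average is strictly less than $A(t)$, and $\dot W(t)>0$. Since $W$ has a unique interior maximum by Remark~\ref{rem:2}, this gives $t_W\ge t^*$.

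\textbf{Step 3: $t_W<t_I$.} At $t_W$ we have $A(t_W)=\frac1\tau\int_{t_W-\tau}^{t_W}A$. I will first rule out $t_W-\tau\ge t^*$. In that case $A$ would be strictly decreasing on the window, so the average would strictly exceed $A(t_W)$, a contradiction. Thus $t_W-\tau<t^*\le t_W$.

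Next, suppose $A(t_W-\tau)\ge A(t_W)$. Unimodality would then give $A(u)\ge A(t_W)$ on the whole window, with strict inequality on a set of positive measure near $t^*$. The average would again strictly exceed $A(t_W)$, a contradiction. Hence $A(t_W)>A(t_W-\tau)$, that is, $\dot I(t_W)>0$. By the sign structure from Step 1, this forces $t_W<t_I$.

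\textbf{Main obstacle.} The expected difficulty is not the monotonicity argument itself but the early-time terms. The full formulas in Theorem~\ref{thm:1} contain the initial-cohort contributions $\varepsilon\mathbf 1_{\{t<\tau\}}$ and $\varepsilon(1-t/\tau)_+$, and the windows are truncated at $0$. I would handle this in one of two ways. The first is to adopt the convention implicit in \eqref{eq:int_rep}, namely $A$ supported on $[0,\infty)$ with the seed absorbed into the delay initial data \eqref{eq:DDE3}. The second is to argue that the relevant times exceed $\tau$ in the regime considered, so the extra terms vanish.

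Separately, I will check that the maximizers are interior critical points, so that the first-order conditions above apply. For $W$ this is supplied by Remark~\ref{rem:2}. For $I$ it follows from the sign change of $g$ established in Step 1.
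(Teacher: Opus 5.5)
Your proposal is correct and follows essentially the same route as the paper. Both use the first-order condition $A(t_W)=\frac1\tau\int_{t_W-\tau}^{t_W}A(u)\,du$ (the paper gets it from $S(t_W)=1/R_0$ and $A=\beta S I$, you by differentiating the triangular kernel), the unimodality contradictions giving $t_W-\tau<t^*\le t_W$ and $A(t_W)>A(t_W-\tau)$, and the fact that $A(t_I)=A(t_I-\tau)$ forces $t_I-\tau<t^*$; your Step~1 sign analysis of $A(t)-A(t-\tau)$ is a small improvement, because it makes explicit the unimodality of $I$ that the paper uses tacitly when it passes from $I'(t_W)>0$ to $t_W<t_I$.
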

\noindent The proof of this lemma can be found in Appendix~\ref{app:pl3}. 

The integral representation \eqref{eq:int_rep} along with Lemma~\ref{prop:2} provides allows for more formal  insights into  the approximation~\eqref{eq:sc3}.  Denote

\begin{equation}\label{eq:ka}
\kappa
=
-\left.\frac{d^2}{dt^2}\ln A(t)\right|_{t=t^*}>0.
\end{equation} and \[
A_*=A(t^*).
\]
Then, by Lemma~\ref{prop:2}, near \(t^*\) we have
\[
\ln A(t^*+s)
=
\ln A_*
-\frac{\kappa s^2}{2}
+R(s),
\qquad
|R(s)|\le M|s|^3 ,
\]
for \(|s|\le \tau\). Setting \begin{equation}\label{eq:la}\lambda=\tau\sqrt{\kappa}\end{equation} and
\(t=t^*+x/\sqrt{\kappa}\), a standard Laplace expansion yields
\[
I(t)
=
\frac{A_*}{\sqrt{\kappa}}
\left[
\int_{x-\lambda}^{x}e^{-z^2/2}\,dz
+
O(\eta_I)
\right],
\text{ where }
\eta_I
=
\frac{M}{\kappa^{3/2}}
\sup_x
\int_{x-\lambda}^{x}|z|^3e^{-z^2/2}\,dz .
\]
Consequently,
\[
I_{\max}
=
\frac{A_*}{\sqrt{\kappa}}
\bigl[J_I(\lambda)+O(\eta_I)\bigr],
\]
with
\[
J_I(\lambda)
=
\max_x\int_{x-\lambda}^{x}e^{-z^2/2}\,dz
=
\int_{-\lambda/2}^{\lambda/2}e^{-z^2/2}\,dz
= 
\sqrt{2\pi}\left[2\Phi\left(\frac{\lambda}{2}\right)-1\right],
\]
where $\Phi$ is the standard normal CDF. Similarly,
\[
W_{\max}
=
\frac{A_*}{\sqrt{\kappa}}
\bigl[J_W(\lambda)+O(\eta_W)\bigr],
\]
where
\[
J_W(\lambda)
=
\max_x
\int_0^\lambda
\left(1-\frac{r}{\lambda}\right)
e^{-(x-r)^2/2}\,dr
\text{ and }
\eta_W
=
\frac{M}{\kappa^{3/2}}
\sup_x
\int_0^\lambda
\left(1-\frac{r}{\lambda}\right)
|x-r|^3e^{-(x-r)^2/2}\,dr .
\]

Therefore
\[
\frac{I_{\max}}{W_{\max}}
=
\frac{J_I(\lambda)+O(\eta_I)}
     {J_W(\lambda)+O(\eta_W)},
\]
and, provided \(J_W(\lambda)\) is bounded away from zero,
\[
I_{\max}
=
C(\lambda)W_{\max}
\left[
1+
O\!\left(
\frac{\eta_I}{J_I(\lambda)}
+
\frac{\eta_W}{J_W(\lambda)}
\right)
\right],
\]
where
\begin{equation*}
C(\lambda)
=
\frac{J_I(\lambda)}{J_W(\lambda)},
\end{equation*} and $\lambda$ is given by   \eqref{eq:la}.
The parameter $\lambda$ measures the length of the infectious period relative to the local width of the incidence peak. When $\lambda\ll1$, indicating that the incidence peak is broad relative to the infectious period, we obtain $C(\lambda) = 2 + O(\lambda^2)$ which yields~\eqref{eq:sc3}.

\subsection{Accuracy of the approximation}\label{ssec:accuracy}
The previous section provides  some formal justification  of the approximation $I_{\max} \approx 2W_{\max}$. We now analyze its accuracy by introducing the relative error
\[
E=\frac{2W_{\max}}{I_{\max}}-1.
\]
Below, we derive a theoretical condition for the approximation error to remain below a prescribed tolerance. Specifically, we show that a necessary condition for $E<\eta$, where $\eta>0$, is that the basic reproduction number $R_0$ lies below a threshold determined by $\eta$. We then support this result with numerical simulations showing that the relative error increases monotonically with $R_0$.

\begin{proposition}
\label{pro2}
Assume that $\varepsilon \to 0$ and let $\eta \in (0,1)$. If $E<\eta$, then $R_0<R_0^{\#}(\eta)$, where $R_0^{\#}(\eta)$ is the unique solution to
\be
\label{eq:Rs}
\frac{1+\ln(R_0)}{R_0} = \frac{1-\eta}{2}
\ee
on $R_0 \in (1, +\infty)$.
\end{proposition}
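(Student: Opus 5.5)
Since $E<\eta$ means $2W_{\max}<(1+\eta)I_{\max}$, the plan is to bound $I_{\max}$ and $W_{\max}$ by explicit functions of $R_0$ and then turn the inequality into a condition on $R_0$ alone. From Remark~\ref{rem:2}, with $\varepsilon\to 0$,
\[
W_{\max}=1-\frac{1+\ln R_0}{R_0}.
\]
For $I_{\max}$ I would use only the crude bound $I_{\max}\le 1$: since $S,I,R\ge0$ and they sum to one, $I(t)\le 1-S(t)<1$.

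Substituting both, the hypothesis $E<\eta$ gives
\[
2\left(1-\frac{1+\ln R_0}{R_0}\right)<1+\eta,
\]
which rearranges to
\[
g(R_0):=\frac{1+\ln R_0}{R_0}>\frac{1-\eta}{2}.
\]
The remaining step is to convert this into $R_0<R_0^{\#}(\eta)$, which requires that $g$ be monotone on $(1,\infty)$. I compute $g'(R_0)=-\ln R_0/R_0^2<0$ for $R_0>1$, so $g$ is strictly decreasing there. Its limits are $g(1^+)=1$ and $g(\infty)=0$.

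Because $\eta\in(0,1)$, the target value $(1-\eta)/2$ lies in $(0,1/2)\subset(0,1)$. The intermediate value theorem and strict monotonicity then give a unique solution $R_0^{\#}(\eta)$ of \eqref{eq:Rs} on $(1,\infty)$. The inequality $g(R_0)>g(R_0^{\#})$ is then equivalent to $R_0<R_0^{\#}(\eta)$, which is the claim.

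The main obstacle is justifying the passage $\varepsilon\to0$ in a way that is uniform in $\varepsilon$. I would avoid any limit-interchange problem by keeping $\varepsilon>0$ throughout and working with
\[
W_{\max}=1-\frac{1+\ln\bigl((1-\varepsilon)R_0\bigr)}{R_0},
\]
which is at least the $\varepsilon=0$ expression because $\ln(1-\varepsilon)<0$. Dividing by $I_{\max}\le1$ only makes $2W_{\max}/I_{\max}$ larger, so $E\ge 1-2\bigl(1+\ln R_0\bigr)/R_0$ for every $\varepsilon$. The implication therefore holds for each small $\varepsilon$, and hence in the limit. This assumes $R_0>1$, which is needed for a nontrivial maximum. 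If $R_0\le1$ the conclusion $R_0<R_0^{\#}$ holds trivially, because $R_0^{\#}>1$.
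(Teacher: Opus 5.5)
Your proof is correct and follows the paper's own argument: the closed form for $W_{\max}$, the crude bound $I_{\max}\le 1$, and strict monotonicity of $(1+\ln R_0)/R_0$ on $(1,\infty)$ together with the intermediate value theorem. Your handling of $\varepsilon$ is slightly sharper than the paper's, which simply drops the $\ln(1-\varepsilon)$ term. Since $-\ln(1-\varepsilon)/R_0>0$ only increases $W_{\max}$, you get the necessary condition for every $\varepsilon>0$ with $(1-\varepsilon)R_0>1$, not only in the limit; the remaining window $1<R_0\le 1/(1-\varepsilon)$ is trivial because $R_0^{\#}(\eta)>1/(1-\varepsilon)$ for small $\varepsilon$.
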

The proof is given in Appendix~\ref{app:accr}. This result provides a necessary condition: for the relative error to be smaller than a prescribed tolerance $\eta$, the basic reproduction number must satisfy $R_0<R_0^{\#}(\eta)$. For example, when $\eta=0.1$, we obtain $R_0^{\#}\approx 6.32$ (see Figure~\ref{fig:2}b). The following proposition provides a sharper necessary condition under which the approximation
\(I_{\max} \approx 2W_{\max} \)
is accurate within $\eta$-tolerance.
\begin{proposition}
\label{th:R1}
Assume that $\varepsilon \to 0$. Let $\eta \in (0,1)$ and $\psi \in (0,1/R_0)$. Suppose that $t_I>2\tau$ and that $t_I \notin \{\tau, 2\tau, .., j\tau,.. \}$. If $E<\eta$ and $S(t_I) - \frac{1}{R_0} > - \psi$, then
\be
\label{eq:R1}
1 - \frac{1+\ln(R_0)}{R_0} < \frac{\eta+1}{2} \cdot G\bigg(\frac{1}{R_0} - \psi \bigg),
\ee
where
\[
G(x) = \frac{1-x +\sqrt{1-2x+5x^2} }{2} -x.
\]
\end{proposition}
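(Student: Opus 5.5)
The plan is to bound $I_{\max}$ from above by an explicit function of the susceptible level at the prevalence peak, and then combine this with the closed form of $W_{\max}$. Letting $\varepsilon\to0$ in Remark~\ref{rem:2} gives
\[
W_{\max}=1-\frac{1+\ln R_0}{R_0}.
\]
The hypothesis $E<\eta$ reads $2W_{\max}<(1+\eta)I_{\max}$. So \eqref{eq:R1} follows once we show
\[
I_{\max}=I(t_I)\le G\bigl(S(t_I)\bigr)
\]
and that $G$ is decreasing on the relevant range. With $S(t_I)>1/R_0-\psi>0$, monotonicity then gives $G(S(t_I))<G(1/R_0-\psi)$, and hence the strict inequality \eqref{eq:R1}.

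\textbf{Step 1 (first-order condition at $t_I$).} Since $t_I>2\tau$, the $\varepsilon$-term and the positive part in \eqref{eq:DDE2} vanish near $t_I$ and near $t_I-\tau$, so there
\[
I(t)=S(t-\tau)-S(t).
\]
Because $t_I$ is not a multiple of $\tau$, $I$ is differentiable at $t_I$ (the only possible kinks of the DDE solution sit at multiples of $\tau$). The interior maximum therefore gives $\dot I(t_I)=0$, i.e. $A(t_I)=A(t_I-\tau)$. Writing $x=S(t_I)$, $I=I(t_I)$ and using $S(t_I-\tau)=x+I$, this becomes
\[
xI=(x+I)\,I(t_I-\tau).
\]

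\textbf{Step 2 (conservation bound and a quadratic).} Since $t_I-\tau>\tau$, we have $I(t_I-\tau)=S(t_I-2\tau)-S(t_I-\tau)$. Using $S\le 1$ (indeed $S\le1-\varepsilon$), this gives $I(t_I-\tau)\le 1-x-I$. Substituting into Step 1 yields $xI\le (x+I)(1-x-I)$, which rearranges to
\[
I^2+(3x-1)I+x^2-x\le 0.
\]
Hence $I$ lies below the larger root,
\[
I\le \frac{1-3x+\sqrt{5x^2-2x+1}}{2}=G(x).
\]

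\textbf{Step 3 (monotonicity of $G$).} Compute
\[
G'(x)=-\frac32+\frac{5x-1}{2\sqrt{5x^2-2x+1}}.
\]
The ratio $(5x-1)/\sqrt{5x^2-2x+1}$ is less than $\sqrt5<3$ in absolute value, so $G'<0$ and $G$ is strictly decreasing on $(0,1)$. Chaining the three steps gives
\[
W_{\max}<\frac{1+\eta}{2}\,I_{\max}\le\frac{1+\eta}{2}\,G\bigl(S(t_I)\bigr)<\frac{1+\eta}{2}\,G\Bigl(\frac1{R_0}-\psi\Bigr).
\]

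\textbf{Main obstacle.} The delicate step is justifying $\dot I(t_I)=0$ rigorously, together with the passage $\varepsilon\to0$. I would first argue from the method of steps that $S\in C^1$ away from multiples of $\tau$, which is exactly what the hypothesis $t_I\notin\tau\mathbb{N}$ is meant to guarantee, and that $t_I$ is interior. The limit $\varepsilon\to0$ enters only through the formula for $W_{\max}$. The inequalities in Steps 1–2 hold for each $\varepsilon$, with $S\le 1-\varepsilon\le1$ making the conservation bound valid uniformly.
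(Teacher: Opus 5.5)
Your proposal is correct and follows essentially the paper's route: the same two relations at $t_I$, the bound $I(t_I-\tau)\le 1-S(t_I-\tau)$, a quadratic, monotonicity of $G$, and the $\varepsilon\to0$ formula for $W_{\max}$. The paper solves the quadratic for $S(t_I-\tau)$ rather than for $I_{\max}$, which is an equivalent change of variable. Your explicit check that $G'<0$, via $(5x-1)^2<5(5x^2-2x+1)$, supplies a step the paper only asserts.
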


The proof is given in Appendix~\ref{app:accr}. This result provides a sharper necessary condition for the approximation $I_{\max}\approx 2W_{\max}$ to be accurate. Specifically, if the relative error $E$ is smaller than a prescribed tolerance $\eta$ and if the prevalence peak occurs near the susceptible level at which $W(t)$ is maximized, in the sense that $S(t_I)-\frac{1}{R_0}>-\psi$, then $R_0$ must satisfy~\eqref{eq:R1}. Thus, an accurate approximation is possible only within a restricted range of $R_0$ values determined jointly by the error tolerance $\eta$ and the peak-location tolerance $\psi$.

Numerically, \eqref{eq:R1} appears to define a threshold condition of the form $R_0<R_0^*(\eta,\psi)$ for $R_0<R_0^{\#}$. For example when $\eta=\psi=0.1$, this gives $R_0<3.86$. The next proposition gives an estimate for the upper bound on $\psi$.

\begin{proposition} 
\label{th:psi}
Assume that $R_0>1$, $t_I>2\tau$, and $t_I\notin\{\tau,2\tau,3\tau,\ldots\}$. Define
\be
\label{eg:a00}
a(r)= \frac{\ln(r)+\frac{r-1}{r} \big(\ln(r)-1+\frac{1}{r} \big)}{r-1}
\ee
and let $r^*(R_0)>1$ be the unique solution to
\be
\label{eg:a01}
a(r)\bigg(r+1-\frac{1}{r} \bigg)=R_0.
\ee
Then
\be
\label{eg:a02}
S(t_I)-\frac1{R_0} \geq \frac{a(r^*(R_0))-1}{R_0}.
\ee
\end{proposition}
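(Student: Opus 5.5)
The plan is to turn the first-order condition at the prevalence peak into algebraic relations among the values of $S$ at $t_I$, $t_I-\tau$ and $t_I-2\tau$, and then reduce everything to a single scalar parameter $r$.

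\emph{Setup.} Since $t_I>2\tau$, on a neighbourhood of $t_I$ and of $t_I-\tau$ the delay system \eqref{eq:DDE1}--\eqref{eq:DDE2} reads $I(t)=S(t-\tau)-S(t)$. Since $t_I$ is not a multiple of $\tau$, $I$ is differentiable there. Then $\dot I(t_I)=0$ gives $A(t_I-\tau)=A(t_I)$, that is,
\[
S(t_I-\tau)\,I(t_I-\tau)=S(t_I)\,I(t_I).
\]
Write $s=S(t_I)$ and $r=S(t_I-\tau)/s>1$. Then:
\begin{itemize}
\item $I(t_I)=s(r-1)$;
\item the peak condition gives $I(t_I-\tau)=s(r-1)/r$;
\item using $I(t_I-\tau)=S(t_I-2\tau)-S(t_I-\tau)$, we get $S(t_I-2\tau)=s\,(r+1-1/r)$.
\end{itemize}
Finally, integrating $\dot S/S=-\beta I$ over $[t_I-\tau,t_I]$ gives
\[
\ln r=\beta\int_{t_I-\tau}^{t_I}I(u)\,du .
\]

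\emph{Key inequality.} The core step is a lower bound of the form $sR_0\ge a(r)$. The crude bound $I\le I(t_I)=s(r-1)$ on the window gives $\ln r\le R_0 s(r-1)$, i.e. $sR_0\ge \ln r/(r-1)$. This is exactly the first term of $a(r)$. The additional term $\frac{r-1}{r}\bigl(\ln r-1+\frac1r\bigr)/(r-1)$ must come from a sharper upper bound on $\int_{t_I-\tau}^{t_I}I$, one that exploits that $I$ starts the window at the smaller value $I(t_I-\tau)=I(t_I)/r$.

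I would try the following. Write
\[
\int_{t_I-\tau}^{t_I} I=\int_{t_I-\tau}^{t_I}\bigl(S(u-\tau)-S(u)\bigr)du .
\]
Parametrize the window by the value of $S$, which decreases from $rs$ to $s$, using $du=-dS/(\beta S I)$. The integrand is then controlled by the known endpoint values $s$, $rs$, $s(r+1-1/r)$ together with monotonicity of $S$. I expect the logarithmic term $\ln r-1+1/r=\int_1^r(1-1/y)\,dy$ to arise from such a change of variables, as the deficit $\int (I(t_I)-I)$ measured in the $\ln S$ variable.

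\emph{Closing step.} This part is routine. The constraint $S(t_I-2\tau)\le 1-\varepsilon\le 1$ gives $s(r+1-1/r)\le 1$. Hence
\[
a(r)\,(r+1-1/r)\le sR_0\,(r+1-1/r)\le R_0 .
\]
I would then verify, by elementary calculus, that $r\mapsto a(r)(r+1-1/r)$ is strictly increasing from its limit at $r=1^+$ to $\infty$; this also yields the uniqueness of $r^*(R_0)$. It follows that $r\le r^*(R_0)$. Showing that $a$ is decreasing on $(1,\infty)$ then gives $sR_0\ge a(r)\ge a(r^*)$, which is \eqref{eg:a02}.

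The main obstacle is the key inequality, i.e. producing exactly the second term of $a(r)$. If the change-of-variables approach does not produce it, my fallback is a piecewise comparison. I would bound $I(u)$ on $[t_I-\tau,t_I]$ by an explicit function interpolating between $s(r-1)/r$ and $s(r-1)$, built from $S(u-\tau)\le S(t_I-2\tau)$ and the ODE for $S$, and then integrate that comparison function.
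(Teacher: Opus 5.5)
\textbf{There is a genuine gap.} Your setup and your closing step match the paper's proof. The setup consists of the relations $I(t_I)=s(r-1)$, $I(t_I-\tau)=s(r-1)/r$, $S(t_I-2\tau)=s(r+1-1/r)$ and $\ln r=\beta\int_{t_I-\tau}^{t_I}I$. The closing step uses $S(t_I-2\tau)\le 1$, then the monotonicity of $a(r)(r+1-1/r)$ and of $a$.

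However, the one substantive step, the inequality $sR_0\ge a(r)$, is never established. The route you sketch for it cannot work as stated. Substituting $du=-dS/(\beta S I)$ into $\int_{t_I-\tau}^{t_I}I\,du$ cancels $I$ and returns $\int_s^{rs}dS/(\beta S)=\ln r/\beta$ identically, which is the identity you already have. Endpoint values and monotonicity of $S$ say nothing about how fast $I$ can climb from $I_{\max}/r$ to $I_{\max}$ inside the window. It is precisely that rate that generates the term $\ln r-1+1/r$.

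\textbf{The missing idea} is a growth bound obtained by discarding the outflow term. On $[t_I-\tau,t_I]$,
\[
\dot I(t)=\beta S(t)I(t)-\beta S(t-\tau)I(t-\tau)\le \beta S(t)I(t)\le \beta rs\,I(t),
\]
since $S(t)\le S(t_I-\tau)=rs$. Gr\"onwall from $I(t_I-\tau)=I_{\max}/r$, together with $I\le I_{\max}$, gives $I(t)\le\min\{I_{\max},\,\frac{I_{\max}}{r}e^{\beta rs(t-t_I+\tau)}\}$. The two branches cross at $t'=t_I-\tau+\ln r/(\beta rs)$. This point lies in $(t_I-\tau,t_I)$ because your crude bound gives $\ln r\le\beta\tau s(r-1)<\beta\tau rs$. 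Integrating the envelope yields
\[
\frac{\ln r}{\beta}=\int_{t_I-\tau}^{t_I}I(u)\,du\le \tau I_{\max}-\frac{I_{\max}}{\beta rs}\Bigl(\ln r-1+\frac1r\Bigr).
\]
With $I_{\max}=s(r-1)$ and $R_0=\beta\tau$, this is exactly $sR_0\ge a(r)$.

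\textbf{About your fallback.} Its ingredient is the same idea before the relaxation $S(t)\le rs$: integrating $\dot I\le-\dot S$ from $t_I-\tau$ gives precisely $I(u)\le S(t_I-2\tau)-S(u)$. To use it, you would still have to convert it into a bound on the window length via the dynamics of $S$, for instance $\beta\tau\ge\int_s^{rs}dS/\bigl(S\min\{I_{\max},S(t_I-2\tau)-S\}\bigr)$. That produces a lower bound for $sR_0$ of a different functional form, which you must then compare with $a(r)$.

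It does dominate $a(r)$. Indeed, $S(t_I-2\tau)-S=I_{\max}/r+(rs-S)\le I_{\max}/r+rs\ln(rs/S)$. The right-hand envelope is $\dot I\le\beta rs\,I$ read in the clock $\ln(rs/S)$, and in the same integral it returns exactly $a(r)$.

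None of this is carried out in your proposal, so as written the central estimate is missing.
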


\begin{figure}[ht!]
    \centering
    \subfigure[]{\includegraphics[width=0.49\linewidth]{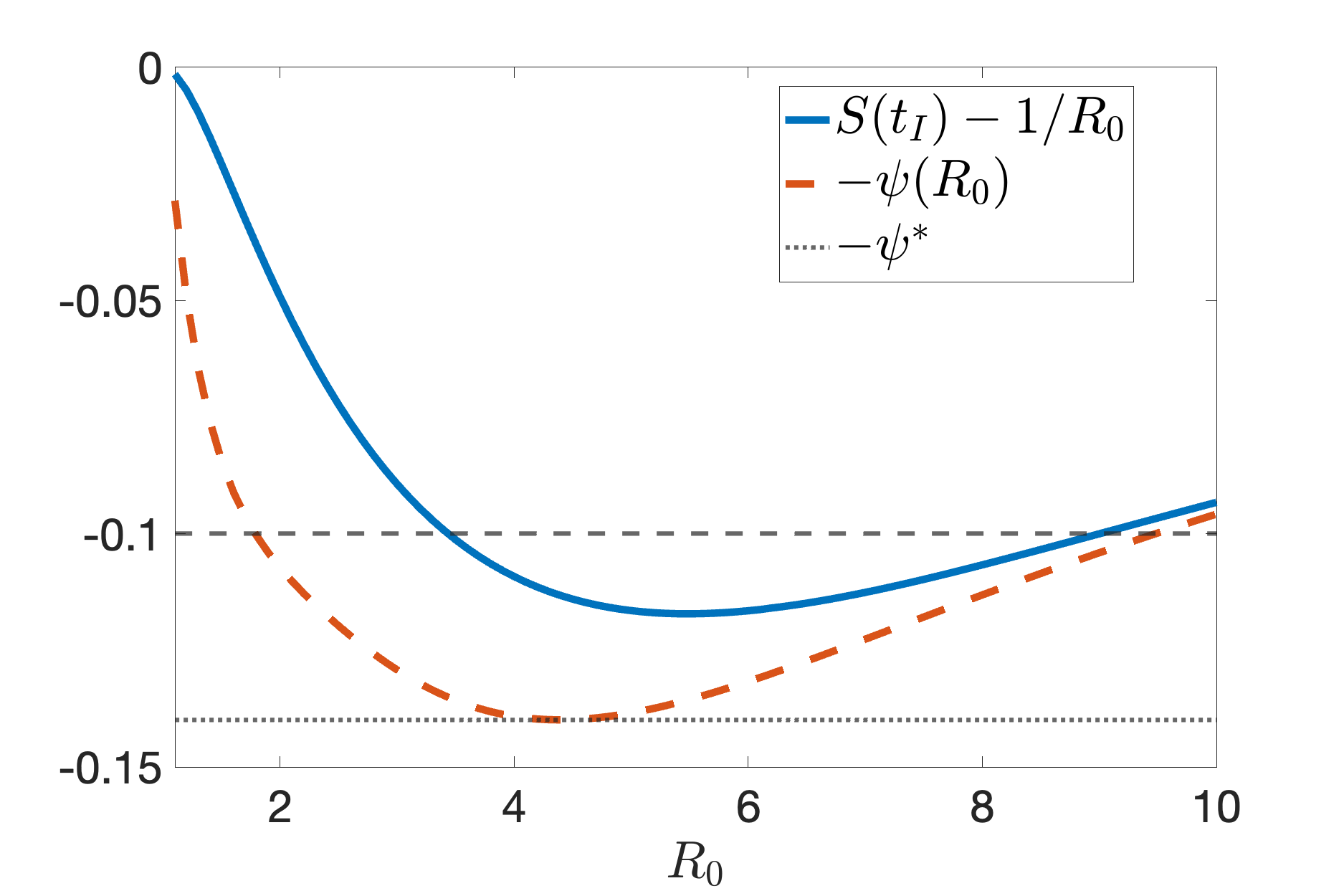}}
    \subfigure[]{\includegraphics[width=0.49\linewidth]{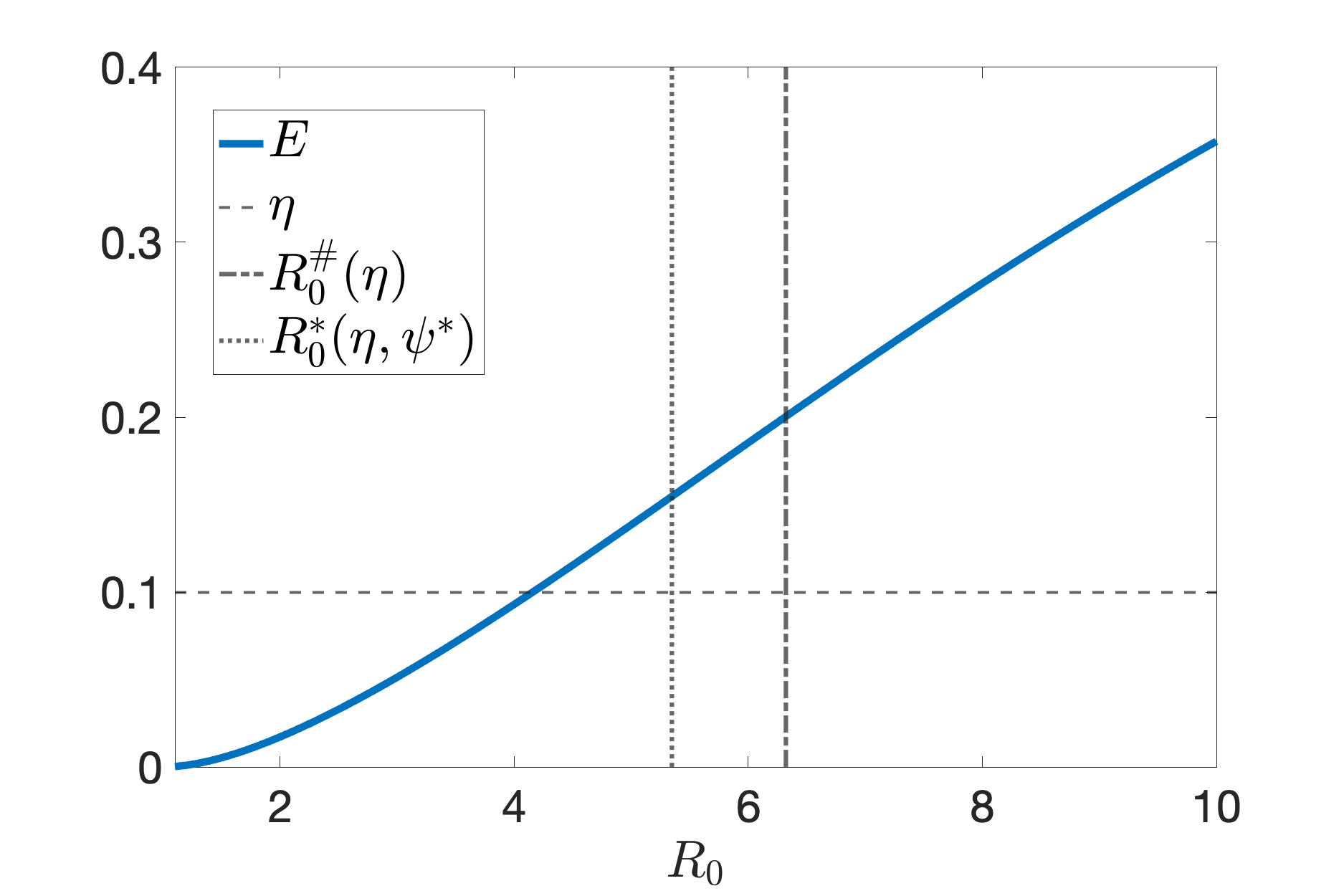}}
    \caption{
    Analytical and numerical constraints on the accuracy of the approximation $I_{\max}\approx 2W_{\max}$. (a) Theoretical lower bound $S(t_I)-1/R_0 \geq-\psi(R_0)$; the maximum over the relevant range is $\psi^*\approx 0.14$. (b) Analytical thresholds for error tolerance $\eta=0.1$: the basic necessary condition $E<\eta$ gives $R_0<R_0^{\#}(\eta)$, while the refined condition gives $R_0<R_0^*(\eta,\psi^*)$. Numerical simulations indicate that $E<0.1$ approximately for $1<R_0<4.2$. Parameters are $\varepsilon=10^{-4}$ and $\tau=10$.
    }
    \label{fig:2}
\end{figure}

The proof is given in Appendix~\ref{app:accr}. This result provides a parameter-dependent lower bound on how far the susceptible fraction at the prevalence peak can lie below the threshold $1/R_0$, $R_0 \in (1,R_0^{\#})$. Specifically, defining
\[
\psi(R_0)=\frac{1-a(r^*(R_0))}{R_0},
\]
we obtain
\[
S(t_I)-\frac{1}{R_0} \geq -\psi(R_0).
\]
This lower bound is illustrated in Figure~\ref{fig:2}a. Since $a(r)\in(0,1)$ for all $r>1$, it follows that for each $R_0\in(1,R_0^{\#})$, $\psi(R_0) \in (0,1/R_0)$. Therefore, Proposition~\ref{th:R1} can be applied with $\psi=\psi(R_0)$. If $E<\eta$, then
\[
1 - \frac{1+\ln(R_0)}{R_0} < \frac{\eta+1}{2} \cdot G\bigg( \frac{a(r^*(R_0))}{R_0} \bigg).
\]
Now define
\[
\psi^* = \sup_{R_0 \in (1,R_0^{\#})} \psi(R_0).
\]
Numerical evaluation gives $\psi^* \approx 0.139 < \frac{1}{R_0^{\#}} = 0.158$. Hence, for every $R_0 \in (1,R_0^{\#})$, $\psi^*<\frac{1}{R_0^{\#}} < \frac{1}{R_0}$. Thus Proposition~\ref{th:R1} can also be applied with the uniform tolerance $\psi^*$ yielding the necessary condition $R_0 < R^*(\eta,\psi^*)$. For $\eta=0.1$, this gives approximately $R_0<5.35$.

The thresholds $R_0^\#$ and $R^*(0.1,\psi^*)$ are shown in Figure~\ref{fig:2}b. For comparison, numerical simulations indicate that the relative error $E$ increases monotonically with $R_0$ and remains below $0.1$ approximately for
\[
1<R_0<4.2.
\]
These results clarify the applicability range of the approximation $I_{\max}\approx 2W_{\max}$. The infinite-stage limit explains the origin of the factor of two, but the analytical thresholds and numerical simulations show that the approximation is reliable only over a restricted range of $R_0$. For larger $R_0$, the relative error becomes substantial, motivating refined approximations that incorporate higher-order information about the local geometry of the incidence curve near its peak.

\subsection{Refined approximations}\label{ssec:refined}
The previous section showed that the simple approximation $I_{\max}\approx 2W_{\max}$ is accurate only over a restricted range of $R_0$, with the relative error increasing substantially as $R_0$ grows. This motivates the construction of refined approximations that retain the interpretability of the simple formula while accounting for higher-order features of the incidence curve. The infinite-stage limit established in Section~\ref{ssec:lim} and the local Laplace approximation developed in Section~\ref{ssec:la} provide a natural basis for such refinements.

Recall that the local Laplace approximation gives
\[
I_{\max} = C(\lambda)W_{\max} \left[1+O\!\left(\frac{\eta_I}{J_I(\lambda)} + \frac{\eta_W}{J_W(\lambda)}\right)\right],
\]
where
\[
C(\lambda) = \frac{J_I(\lambda)}{J_W(\lambda)}, \qquad \lambda=\tau\sqrt{\kappa},
\] and $\kappa$ is given by $\eqref{eq:ka}$
This representation allows us to derive the following family of approximations:

\begin{itemize}
\item
\textbf{The fully corrected approximation (FC).}
\begin{equation}\label{eq:FC_est}
\widehat I_{\max}^{\mathrm{FC}}(\lambda) = C(\lambda)W_{\max}.
\end{equation}

\item
\textbf{The first-order corrected approximation (FO).} Assuming \(\lambda\ll1\), $C(\lambda) = 2\left(1-\frac{\lambda^2}{72}\right) + O(\lambda^4)$ (see Appendix~\ref{app:FO_derivation} for the derivation) so that
\begin{equation}\label{eq:FO_est}
\widehat I_{\max}^{\mathrm{FO}}(\lambda) = 2W_{\max} \left(1-\frac{\lambda^2}{72}\right).
\end{equation}

\item
\textbf{The simple approximation (S).} Neglecting the curvature correction in the previous approximation yields
\begin{equation}\label{eq:S_est}
\widehat I_{\max}^{\mathrm{S}} = 2W_{\max}.
\end{equation}

\item
\textbf{Large-$\lambda$ zero- and one-step corrected approximations $L(0)$ and $L(1)$.} Assuming instead that $\lambda > \sqrt{2\pi}$, we get (see Appendix~\ref{app:L_derivation} for the derivation)
\be
\label{eq:L_est}
\widehat I_{\max}^{L(m)}(\lambda) = \Bigg[\frac{\sqrt{2\pi}\left[2\Phi(\lambda/2)-1\right]}{e^{-(x_W^{(m)})^2/2}\left(\lambda-x_W^{(m)}-\lambda^{-1}\right)}\Bigg] W_{\max},
\ee
where
\[
x_W^{(0)} = \left[ 2\ln\left(\frac{\lambda}{\sqrt{2\pi}}\right)\right]^{1/2}
\]
and
\[
x_W^{(1)} = \left[2\ln\left(\frac{\lambda}{\sqrt{2\pi}\Phi(x_W^{(0)})}\right)\right]^{1/2}.
\]
\end{itemize}

We can now compare these approximations of $I_{\max}$. Relative errors are reported as
\[
\mathrm{E}^{X} = \frac{\widehat I_{\max}^{X}-I_{\max}}{I_{\max}}, \qquad
X\in\{\mathrm{FC},\mathrm{FO},\mathrm{S},\mathrm{L(0)},\mathrm{L(1)}\}.
\]

\begin{table}[ht!]
\centering
\caption{
Comparison of errors for the simple (S), first-order corrected (FO), fully corrected (FC), and large-$\lambda$ zero- and one-step corrected approximations, denoted $L(0)$ and $L(1)$, respectively. Blue bolded entries indicate errors with absolute value less than $0.1$. Parameters are $\tau=10$ and $\varepsilon=10^{-4}$.
}
\label{tab:Imax_errors}
\begin{tabular}{c c c c c c c c}
\hline
$R_0$ & $\kappa$ & $\lambda$ & $E^S$ & $E^{FO}$ & $E^{FC}$ & $E^{L(0)}$ & $E^{L(1)}$ \\
\hline
$1.5$  & $0.003705$ & $0.609$ & $\bf\textcolor{blue}{0.005}$ & $\bf\textcolor{blue}{0.000}$  & $\bf\textcolor{blue}{0.000}$  & --       & --       \\
$2.0$  & $0.012303$ & $1.109$ & $\bf\textcolor{blue}{0.017}$ & $\bf\textcolor{blue}{-0.000}$ & $\bf\textcolor{blue}{0.000}$  & --       & --       \\
$2.5$  & $0.024121$ & $1.553$ & $\bf\textcolor{blue}{0.033}$ & $\bf\textcolor{blue}{-0.002}$ & $\bf\textcolor{blue}{-0.000}$ & --       & --       \\
$3.0$  & $0.038555$ & $1.964$ & $\bf\textcolor{blue}{0.051}$ & $\bf\textcolor{blue}{-0.005}$ & $\bf\textcolor{blue}{-0.001}$ & --       & --       \\
$3.5$  & $0.055364$ & $2.353$ & $\bf\textcolor{blue}{0.072}$ & $\bf\textcolor{blue}{-0.011}$ & $\bf\textcolor{blue}{-0.003}$ & --       & --       \\
$4.0$  & $0.074451$ & $2.729$ & $\bf\textcolor{blue}{0.093}$ & $\bf\textcolor{blue}{-0.020}$ & $\bf\textcolor{blue}{-0.006}$ & $-0.367$ & $0.374$  \\
$4.5$  & $0.095758$ & $3.094$ & $0.115$ & $\bf\textcolor{blue}{-0.033}$ & $\bf\textcolor{blue}{-0.009}$ & $-0.286$ & $0.159$  \\
$5.0$  & $0.119413$ & $3.456$ & $0.138$ & $\bf\textcolor{blue}{-0.050}$ & $\bf\textcolor{blue}{-0.013}$ & $-0.238$ & $\bf\textcolor{blue}{0.083}$  \\
$5.5$  & $0.145288$ & $3.812$ & $0.162$ & $\bf\textcolor{blue}{-0.073}$ & $\bf\textcolor{blue}{-0.018}$ & $-0.207$ & $\bf\textcolor{blue}{0.045}$  \\
$6.0$  & $0.173470$ & $4.165$ & $0.185$ & $-0.100$ & $\bf\textcolor{blue}{-0.023}$ & $-0.185$ & $\bf\textcolor{blue}{0.021}$  \\
$6.5$  & $0.203985$ & $4.516$ & $0.209$ & $-0.134$ & $\bf\textcolor{blue}{-0.028}$ & $-0.170$ & $\bf\textcolor{blue}{0.005}$  \\
$7.0$  & $0.236858$ & $4.867$ & $0.232$ & $-0.173$ & $\bf\textcolor{blue}{-0.033}$ & $-0.159$ & $\bf\textcolor{blue}{-0.008}$ \\
$7.5$  & $0.272096$ & $5.216$ & $0.254$ & $-0.220$ & $\bf\textcolor{blue}{-0.037}$ & $-0.150$ & $\bf\textcolor{blue}{-0.017}$ \\
$8.0$  & $0.309712$ & $5.565$ & $0.277$ & $-0.273$ & $\bf\textcolor{blue}{-0.040}$ & $-0.143$ & $\bf\textcolor{blue}{-0.025}$ \\
$8.5$  & $0.349480$ & $5.912$ & $0.298$ & $-0.332$ & $\bf\textcolor{blue}{-0.043}$ & $-0.137$ & $\bf\textcolor{blue}{-0.030}$ \\
$9.0$  & $0.392103$ & $6.262$ & $0.319$ & $-0.399$ & $\bf\textcolor{blue}{-0.046}$ & $-0.132$ & $\bf\textcolor{blue}{-0.035}$ \\
$9.5$  & $0.437260$ & $6.613$ & $0.339$ & $-0.474$ & $\bf\textcolor{blue}{-0.048}$ & $-0.128$ & $\bf\textcolor{blue}{-0.038}$ \\
$10.0$ & $0.484408$ & $6.960$ & $0.358$ & $-0.556$ & $\bf\textcolor{blue}{-0.049}$ & $-0.123$ & $\bf\textcolor{blue}{-0.041}$ \\
\hline
\end{tabular}
\end{table}

\begin{figure}[ht!]
    \centering
    \subfigure[]{\includegraphics[width=0.495\linewidth]{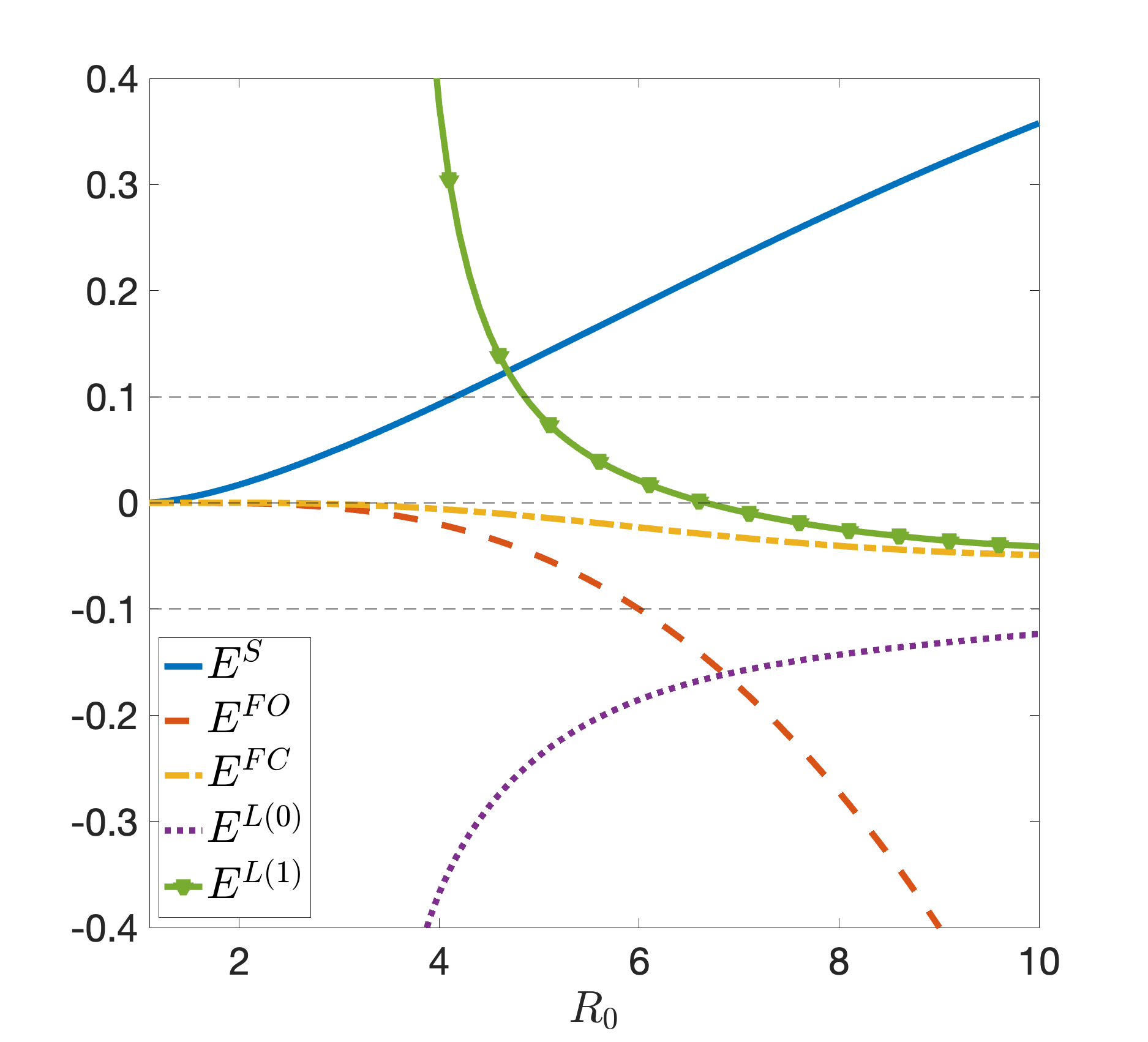}}
    \subfigure[]{\includegraphics[width=0.495\linewidth]{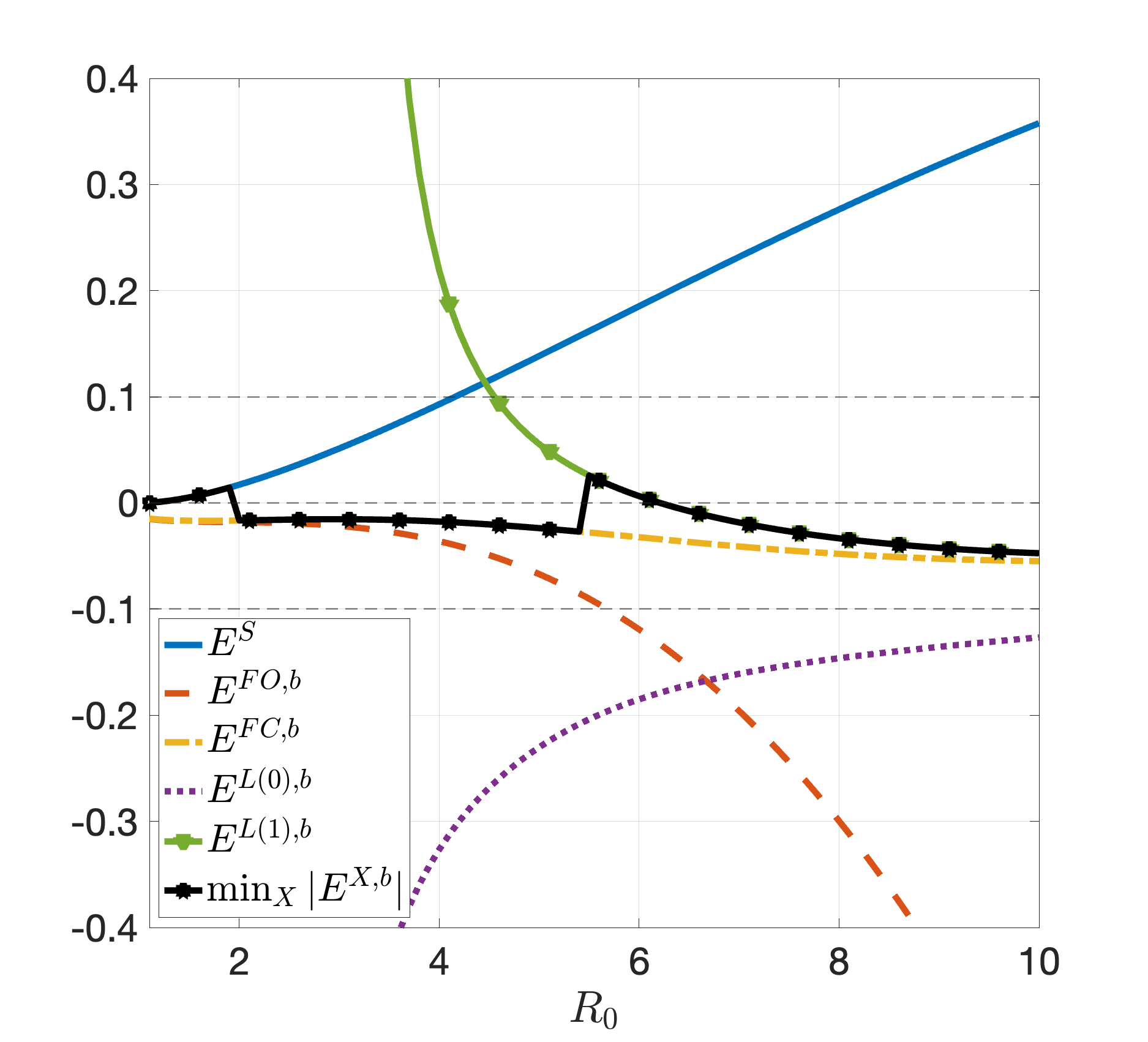}}
    \caption{
    Relative errors $E$ of different approximations of $I_{\max}$ as functions of $R_0$. (a) Errors for the simple (S), first-order corrected (FO), fully corrected (FC), and large-$\lambda$ corrected approximations $L(0)$ and $L(1)$, using trajectory-based $\lambda$. (b) Corresponding plug-in errors using $\lambda=\sqrt{(R_0^2+1)/2}$; the black line marks the approximation with the smallest absolute error for each $R_0$. In both panels, $E=\widehat I_{\max}/I_{\max}-1$, with positive values indicating overestimation and negative values indicating underestimation. Horizontal dashed lines mark $E=0$ and the $\pm 0.1$ thresholds. Parameters are $\tau=10$ and $\varepsilon=10^{-4}$.
    }
    \label{fig:3}
\end{figure}

Table~\ref{tab:Imax_errors} and Figure~\ref{fig:3}a show that the first-order correction (FO) extends the range over which the simple approximation is accurate, remaining below the $0.1$ error threshold up to approximately $R_0=6.0$, but it eventually overcorrects and substantially underestimates $I_{\max}$ for larger $R_0$, consistent with the fact that FO is based on a small-$\lambda$ expansion. The fully corrected estimator (FC) performs best across the full range of $R_0$ considered. The large-$\lambda$ zero-step approximation $L(0)$ is too crude and systematically underestimates $I_{\max}$. In contrast, the one-step approximation $L(1)$ is substantially more accurate: after an initial failure near $R_0=4$, it remains within the $0.1$ error threshold for $R_0\geq5$. 

Overall, these results indicate a clear hierarchy. The simple estimator is accurate only when the incidence peak is broad, corresponding to smaller $R_0$. FO improves the approximation for intermediate $R_0$, while $L(1)$ provides a useful explicit approximation for larger $R_0$. FC remains accurate across the full range considered. A limitation of the refined approximations is their dependence on the parameter $\kappa$, which is obtained from a portion of the epidemic trajectory. We therefore next ask whether comparable approximations can be constructed using only model parameters, without requiring trajectory-level information. To this end, we first establish the following result.

\begin{lemma}
\label{th:lam}
Assume that $R_0 > 1$ and the incidence function $A(t)$ is strictly unimodal with unique maximizer $t^*>\tau$. Then
\[
\lambda\leq \sqrt{\frac{R_0^2+1}{2}}<R_0.
\]
\end{lemma}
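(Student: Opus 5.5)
The plan is to compute $\kappa$ explicitly from the delay system \eqref{eq:DDE1}--\eqref{eq:DDE2} and then bound each resulting term by a function of $R_0=\beta\tau$. Write $A=\beta S I$, so that
\[
\ln A=\ln\beta+\ln S+\ln I,\qquad (\ln S)'=-\beta I,\qquad (\ln S)''=-\beta \dot I .
\]
At the incidence peak $t^*$ we have $A'(t^*)=0$. This gives $\dot I/I=\beta I$ at $t^*$, that is, $\dot I(t^*)=\beta I(t^*)^2$. Substituting into
\[
\kappa=-(\ln S)''-(\ln I)''=\beta\dot I-\frac{\ddot I}{I}+\Big(\frac{\dot I}{I}\Big)^2,
\]
all quantities evaluated at $t^*$, gives $\kappa=2\beta^2I^2-\ddot I/I$.

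Since $t^*>\tau$, the representation \eqref{eq:int_rep} applies, and it gives
\[
\dot I(t)=A(t)-A(t-\tau),\qquad \ddot I(t)=A'(t)-A'(t-\tau).
\]
At $t^*$ this yields $\ddot I(t^*)=-A'(t^*-\tau)$, and hence
\[
\lambda^2=\tau^2\kappa=2R_0^2I(t^*)^2+\tau^2\,\frac{A'(t^*-\tau)}{I(t^*)} .
\]
Strict unimodality makes $A'(t^*-\tau)\ge 0$, so this second term is nonnegative and must itself be controlled.

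For the first term, combine $\dot I(t^*)=\beta I^2$ with $\dot I(t^*)=A_*-A(t^*-\tau)<A_*=\beta S I$. This gives $I(t^*)<S(t^*)$, and together with $S+I\le 1$ it gives $I(t^*)<1/2$. Hence $2R_0^2I(t^*)^2<R_0^2/2$, which is exactly the main part of the target bound. It then remains to show
\[
\tau^2 A'(t^*-\tau)\le \tfrac12 I(t^*)
\]
(or a sharper joint estimate of the two terms). For this I would use $A'(u)=\beta S(u)\big(\dot I(u)-\beta I(u)^2\big)$ at $u=t^*-\tau$, together with the identity
\[
A(t^*-\tau)=A_*-\beta I(t^*)^2=\beta I(t^*)\big(S(t^*)-I(t^*)\big),
\]
which links the lagged incidence back to quantities at $t^*$.

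I expect the control of the lagged-derivative term to be the main obstacle. The crude bound $A'(u)\le\beta S(u)A(u)$ loses too much, so it will not suffice. What I would try first is to trade the two terms against each other. The slack $R_0^2/2-2R_0^2I^2$ in the first term grows exactly when $I(t^*)$ is small relative to $S(t^*)$, and by the identity above this is also when $A(t^*-\tau)$ is comparable to $A_*$. In that regime $A$ is nearly flat on $[t^*-\tau,t^*]$, and then $A'(t^*-\tau)$ should be small. To make this quantitative I would apply the mean value theorem on $[t^*-\tau,t^*]$ together with log-concavity of $A$ near its peak, i.e. the quadratic expansion of $\ln A$ used in Section~\ref{ssec:la}. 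If this fails, the fallback is to bound $\lambda^2$ directly through $\ln\big(A_*/A(t^*-\tau)\big)\approx\kappa\tau^2/2$ and then invoke $I(t^*)\le S(t^*)$. Once the bound $\lambda\le\sqrt{(R_0^2+1)/2}$ is obtained, the final strict inequality is immediate: $R_0>1$ implies $(R_0^2+1)/2<R_0^2$.
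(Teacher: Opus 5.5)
\textbf{There is a genuine gap: the proposal never bounds the lagged term.} Your reduction is correct and matches the paper's: $\kappa = 2\beta^2 I(t^*)^2 + A'(t^*-\tau)/I(t^*)$, $I(t^*)<S(t^*)$, and $A(t^*-\tau)=\beta I(t^*)\bigl(S(t^*)-I(t^*)\bigr)$. The failure is in the step you yourself flag as the main obstacle. Your plan for it aims at the wrong target and discards the estimate that works.

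You split the bound into $2R_0^2I^2<R_0^2/2$ plus a separate inequality $\tau^2A'(t^*-\tau)\le \tfrac12 I(t^*)$. That second inequality is stronger than the lemma needs, and nothing in your plan proves it. The mean-value/log-concavity route assumes $\ln A$ is concave on $[t^*-\tau,t^*]$, but the only hypothesis is strict unimodality. The fallback $\ln\bigl(A_*/A(t^*-\tau)\bigr)\approx\kappa\tau^2/2$ is a Laplace heuristic, not an inequality, so it cannot give a rigorous upper bound on $\lambda$.

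\textbf{How the paper closes it.} The ``crude'' bound you reject is exactly the one the paper uses. From $(\ln A)'=\beta(S-I)-A(t-\tau)/I\le\beta$ one gets $A'(t^*-\tau)\le\beta A(t^*-\tau)$. Inserting your own identity for $A(t^*-\tau)$ gives $\tau^2A'(t^*-\tau)/I(t^*)\le R_0^2\bigl(S(t^*)-I(t^*)\bigr)$, hence
\[
\lambda^2\le R_0^2\bigl(2I^2+S-I\bigr),\qquad I=I(t^*),\ S=S(t^*).
\]
Two ingredients missing from your proposal then finish the argument.
\begin{itemize}
\item \emph{A constraint on $S-I$.} Since $A$ is increasing on $[t^*-\tau,t^*]$, we have $I(t^*)=\int_{t^*-\tau}^{t^*}A\ge\tau A(t^*-\tau)=R_0\,I(t^*)(S-I)$. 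So $d:=S-I\le 1/R_0$.
\item \emph{A joint optimization of the two terms.} The condition $S+I\le1$ reads $I\le(1-d)/2$. Therefore $2I^2+d\le(1-d)^2/2+d=(1+d^2)/2\le(1+R_0^{-2})/2$, i.e.\ $\lambda^2\le(R_0^2+1)/2$.
\end{itemize}
The paper phrases the second step as maximizing the convex function $2I^2+S-I$ over the polygon $I\le S$, $S+I\le1$, $S-I\le1/R_0$. This is the precise form of the trade-off you intuited: a large lagged term needs large $d$, which forces $I$ away from $1/2$.

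The crude estimate only looked insufficient because you bounded the terms separately. Done that way it gives at best $\lambda^2<R_0^2/2+R_0$. Without the constraint $d\le1/R_0$, even the joint bound gives only $\lambda\le R_0$.
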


The proof is given in Appendix~\ref{app:plam}. The lemma provides an explicit upper bound for the width parameter $\lambda$. We use this bound as a plug-in value in the refined approximations in order to obtain formulas that depend only on $R_0$. Specifically, define the bound-based plug-in approximations $\widehat I_{\max}^{\mathrm{FC}} \bigg( \sqrt{\frac{R_0^2+1}{2}} \bigg)$, $\widehat I_{\max}^{\mathrm{FO}} \bigg( \sqrt{\frac{R_0^2+1}{2}} \bigg)$, and $\widehat I_{\max}^{L(m)} \bigg( \sqrt{\frac{R_0^2+1}{2}} \bigg)$. The results on their performances are shown in Table~\ref{tab:Imax_errors1} and Figure~\ref{fig:3}b, indicating that the simple approximation performs best for $1<R_0<2$, the fully corrected plug-in approximation performs best for $2\leq R_0<5.5$, and the large-$\lambda$ one-step corrected plug-in approximation performs best for $R_0\geq 5.5$. Overall, the combined approximation
\[
I_{\max} \approx 
\begin{cases}
\widehat I_{\max}^{\mathrm{S}}, \text{ if } R_0 \in (1,2),\\
\widehat I_{\max}^{\mathrm{FC}} \bigg( \sqrt{\frac{R_0^2+1}{2}} \bigg), \text{ if } R_0 \in [2,5.5),\\
\widehat I_{\max}^{L(1)} \bigg( \sqrt{\frac{R_0^2+1}{2}} \bigg), \text{ if } R_0 \in [5.5,10].\\
\end{cases}
\]
has absolute relative error $|E|<0.05$ over the parameter range considered. The performance of the bound-based plug-in approximations for finite values of $k$ is shown in Figures~\ref{fig:5} and~\ref{fig:6}.

\begin{table}[ht!]
\centering
\caption{
Comparison of errors for the simple $(S)$, first-order corrected plug-in $(FO,p)$, fully corrected plug-in $(FC,p)$, and large-$\lambda$ zero- and one-step corrected plug-in approximations, denoted $(L(0),p)$ and $(L(1),p)$, respectively. The plug-in approximations use the bound-based value $\lambda_b=\sqrt{(R_0^2+1)/2}$. Blue bolded  entries indicate for each value of $R_0$ the approximation with the smallest absolute error among the listed approximations. Parameters are $\tau=10$ and $\varepsilon=10^{-4}$.
}
\label{tab:Imax_errors1}
\begin{tabular}{c c c c c c c c}
\hline
$R_0$ & $\kappa$ & $\lambda_b$ & $E^S$ & $E^{FO}_p$ & $E^{FC}_p$ & $E^{L(0)}_p$ & $E^{L(1)}_p$ \\
\hline
$1.5$ & $0.003705$ & $1.275$ & $\bf\textcolor{blue}{0.005}$ & $-0.017$ & $-0.017$ & -- & -- \\
$2.0$ & $0.012303$ & $1.581$ & $0.017$ & $-0.018$ & $\bf\textcolor{blue}{-0.017}$ & -- & -- \\
$2.5$ & $0.024121$ & $1.904$ & $0.033$ & $-0.019$ & $\bf\textcolor{blue}{-0.016}$ & -- & -- \\
$3.0$ & $0.038555$ & $2.236$ & $0.051$ & $-0.022$ & $\bf\textcolor{blue}{-0.015}$ & -- & -- \\
$3.5$ & $0.055364$ & $2.574$ & $0.072$ & $-0.027$ & $\bf\textcolor{blue}{-0.016}$ & $-0.434$ & $0.649$ \\
$4.0$ & $0.074451$ & $2.915$ & $0.093$ & $-0.036$ & $\bf\textcolor{blue}{-0.018}$ & $-0.326$ & $0.219$ \\
$4.5$ & $0.095758$ & $3.260$ & $0.115$ & $-0.049$ & $\bf\textcolor{blue}{-0.020}$ & $-0.268$ & $0.107$ \\
$5.0$ & $0.119413$ & $3.606$ & $0.138$ & $-0.067$ & $\bf\textcolor{blue}{-0.024}$ & $-0.230$ & $0.055$ \\
$5.5$ & $0.145288$ & $3.953$ & $0.162$ & $-0.090$ & $-0.028$ & $-0.204$ & $\bf\textcolor{blue}{0.026}$ \\
$6.0$ & $0.173470$ & $4.301$ & $0.185$ & $-0.119$ & $-0.032$ & $-0.185$ & $\bf\textcolor{blue}{0.006}$ \\
$6.5$ & $0.203985$ & $4.650$ & $0.209$ & $-0.154$ & $-0.037$ & $-0.171$ & $\bf\textcolor{blue}{-0.008}$ \\
$7.0$ & $0.236858$ & $5.000$ & $0.232$ & $-0.196$ & $-0.041$ & $-0.161$ & $\bf\textcolor{blue}{-0.018}$ \\
$7.5$ & $0.272096$ & $5.350$ & $0.254$ & $-0.244$ & $-0.045$ & $-0.153$ & $\bf\textcolor{blue}{-0.027}$ \\
$8.0$ & $0.309712$ & $5.701$ & $0.277$ & $-0.300$ & $-0.048$ & $-0.146$ & $\bf\textcolor{blue}{-0.033}$ \\
$8.5$ & $0.349480$ & $6.052$ & $0.298$ & $-0.362$ & $-0.051$ & $-0.141$ & $\bf\textcolor{blue}{-0.038}$ \\
$9.0$ & $0.392103$ & $6.403$ & $0.319$ & $-0.432$ & $-0.053$ & $-0.136$ & $\bf\textcolor{blue}{-0.042}$ \\
$9.5$ & $0.437260$ & $6.755$ & $0.339$ & $-0.510$ & $-0.054$ & $-0.131$ & $\bf\textcolor{blue}{-0.045}$ \\
$10.0$ & $0.484408$ & $7.106$ & $0.358$ & $-0.595$ & $-0.055$ & $-0.127$ & $\bf\textcolor{blue}{-0.047}$ \\
\hline
\end{tabular}
\end{table}

\begin{figure}[ht!]
    \centering
    \includegraphics[width=1\linewidth]{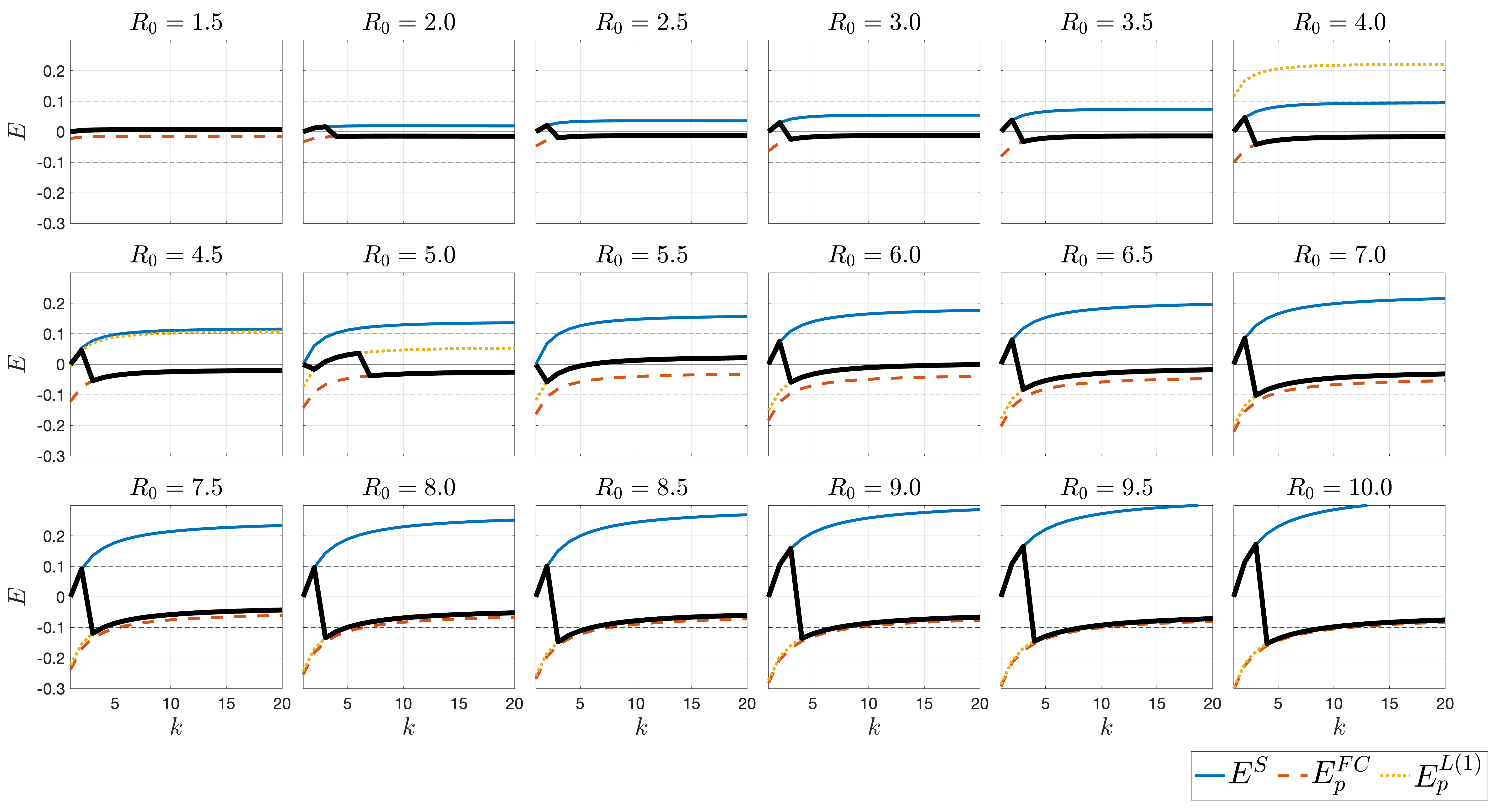}
    \caption{
    Relative errors of approximations of $I_{\max}^{(k)}$ in the finite $SI(k)R$ model as functions of the number of infectious stages $k$, shown for different values of $R_0$. The figure compares the simple approximation $(S)$, the fully corrected plug-in approximation $(FC,p)$ using $\lambda=\sqrt{(R_0^2+1)/2}$, and the corresponding large-$\lambda$ one-step approximation ($L(1),p$). The scaled weighted peak is $W_{\max}=V_{\max}^{(k)}/(k+1)$, and the error is $E=\widehat I_{\max}^{(k)}/I_{\max}^{(k)}-1$, with positive values indicating overestimation. Black lines indicate the approximation with the smallest absolute error for each $k$. Horizontal dashed lines mark $E=0$ and the $\pm0.1$ thresholds. Parameters are $\tau=10$ and $\varepsilon=0.01$.
    }
    \label{fig:5}
\end{figure}

\section{Summary and Discussion}\label{sec:summary}

Estimating the maximum prevalence of infection is a fundamental problem in epidemic modeling because peak prevalence is closely related to healthcare demand, workforce disruption, and overall epidemic burden. In practice, however, prevalence is often difficult to observe directly, whereas incidence-related quantities are more readily available from surveillance systems. Motivated by this observation, we investigated the relationship between peak prevalence and the maximum of a weighted stage aggregate in multistage SIR epidemic models.

Our analysis reveals that the behavior of this relationship depends critically on how the infectious-stage structure is scaled as the number of stages increases. Under the naive scaling, in which the stage progression rate remains fixed while the number of stages grows, the epidemic occupies only a small portion of the infectious chain. In this regime, the weighted stage aggregate becomes asymptotically equivalent to prevalence itself, and the approximation \(I_{\max}\approx 2W_{\max}\) fails. Instead, the weighted and unweighted peaks converge to the same limiting quantity.

The situation is fundamentally different under Erlang scaling, where the stage progression rate increases proportionally with the number of stages and the mean infectious period remains fixed. In this case, we establish a infinite-stage limit that connects the finite-dimensional multistage model to a delay epidemic formulation. The limiting representation provides a transparent interpretation of the two key quantities studied in the paper: prevalence becomes a moving average of incidence over the infectious period, whereas the weighted stage functional becomes a triangularly weighted moving average over the same interval. This observation explains why the approximation
\[
I_{\max}\approx 2W_{\max}
\]
arises naturally in the limit of a large number of infectious stages under Erlang scaling. 

A second contribution of the paper is to quantify the accuracy of this approximation. Our theoretical results show that the quality of the approximation depends strongly on the epidemic growth potential, as measured by the basic reproduction number \(R_0\). Both analytical bounds and numerical experiments indicate that the approximation is highly accurate for low values of \(R_0\), but becomes progressively less accurate as epidemic waves become sharper and more concentrated. In particular, numerical results suggest that the simple factor-two approximation remains within approximately ten percent relative error for a substantial range of epidemiologically relevant parameter values.

To improve upon the basic approximation, we derived a family of asymptotic corrections based on the local geometry of the incidence curve near its peak. These corrections show that the discrepancy between \(I_{\max}\) and \(2W_{\max}\) is controlled by the characteristic width of the incidence peak and can be expressed through a curvature parameter associated with the incidence trajectory. The resulting corrected approximations remain highly accurate over a much broader range of transmission intensities than the simple factor-two rule.

An important practical observation is that the curvature parameter is generally not known a priori and may be difficult to estimate reliably from epidemic data. To address this issue, we derived explicit parameter-based approximations that eliminate the need to estimate epidemic curvature directly. In particular, the large-\(\lambda\) one-step  plug-in correction \(L(1)\) depends only on model parameters and provides accurate estimates of peak prevalence when \(R_0\) is relatively large. Numerical experiments suggest that combining the simple approximation, the corrected plug-in approximation, and the \(L(1)\) plug-in approximation yields uniformly good performance across a broad range of epidemic scenarios, with relative errors typically below five percent over the parameter range examined.

Beyond the specific approximations developed here, the analysis highlights a broader connection between multistage epidemic models and delay epidemic formulations. The infinite-stage limit provides a useful framework for understanding how quantities defined in finite-dimensional compartmental models translate into weighted incidence functionals in limiting age-of-infection descriptions. This perspective may prove useful in other settings where direct observation of prevalence is difficult but incidence data are readily available.

Several extensions remain possible. The present work focuses on deterministic epidemic dynamics with homogeneous mixing and Erlang-distributed infectious periods. It would be of interest to investigate analogous prevalence approximations in stochastic epidemic models, heterogeneous contact structures, and more general infectious-period distributions. Another natural direction is the development of data-driven methods for estimating the weighted incidence functionals studied here from partially observed surveillance data.

In summary, the paper provides a rigorous analysis of the relationship between peak prevalence and weighted incidence in multistage epidemic models. We show that the widely observed approximation \(I_{\max}\approx 2W_{\max}\) emerges naturally under Erlang scaling, identify the conditions under which it is accurate, derive higher-order corrections that substantially improve its performance, and develop practical parameter-only approximations that remain effective across a wide range of epidemic regimes. These results provide both theoretical insight and useful tools for estimating epidemic burden from quantities that are often easier to observe or reconstruct in practice.

\newpage
\backmatter

\section*{Statements and Declarations}

\subsection*{Funding}

The authors graciously acknowledge the support   through  OSU HELM Initiative. 

\subsection*{Competing interests}

The authors have no relevant financial or non-financial interests to disclose.

\subsection*{Data availability}

No new datasets were generated or analyzed during the preparation of this article. The code used to produce the analytical and numerical results is publicly available in the GitHub repository: \href{https://github.com/dtverskoi/Approximating-Peak-Prevalence-in-Multistage-SIR-Epidemics}{https://github.com/dtverskoi/Approximating-Peak-Prevalence-in-Multistage-SIR-Epidemics}

\subsection*{Author contributions}

DT and GAR conceived and designed the study, derived the analytical results, and performed the numerical simulations. AG contributed to the numerical simulations and developed the software used for the analyses. All authors read and approved the final manuscript.
\clearpage

\phantomsection
\addcontentsline{toc}{section}{References}

\newpage
\begin{appendices}

\setcounter{subsection}{0}
\setcounter{equation}{0}
\setcounter{figure}{0}
\setcounter{table}{0}

\renewcommand{\thesubsection}{A\arabic{subsection}}
\renewcommand{\theequation}{A\arabic{equation}}
\renewcommand{\thefigure}{A\arabic{figure}}
\renewcommand{\thetable}{A\arabic{table}}

\section*{Appendix}
\addcontentsline{toc}{section}{Appendix}

\subsection{Point mass limit in Theorem~\ref{thm:1}}
\label{app:wc}

To complete the proof of Theorem~\ref{thm:1}, it remains to justify the weak limit of the initial cohort.  For fixed \(t>0\),
consider the measure
\[
\mu_k^t(dx):=\varepsilon k p_{k,\lfloor kx\rfloor}(t)\,dx .
\]
For any bounded continuous test function \(\varphi\),
\[
\int_0^1 \varphi(x)\,\mu_k^t(dx)
=
\varepsilon\int_0^1 \varphi(x) k p_{k,\lfloor kx\rfloor}(t)\,dx .
\]
Partitioning \([0,1]\) into intervals \([i/k,(i+1)/k)\), we obtain
\[
\int_0^1 \varphi(x)\,\mu_k^t(dx)
=
\varepsilon
\sum_{i=0}^{k-1} p_{k,i}(t)
\int_{i/k}^{(i+1)/k} k\varphi(x)\,dx .
\]
By uniform continuity of \(\varphi\),
\[
\int_{i/k}^{(i+1)/k} k\varphi(x)\,dx
=
\varphi(i/k)+o(1),
\]
uniformly in \(i\). Hence
\[
\int_0^1 \varphi(x)\,\mu_k^t(dx)
=
\varepsilon
\sum_{i=0}^{k-1}\varphi(i/k)p_{k,i}(t)+o(1).
\]

Now \(p_{k,i}(t)\) is the probability that the initial individual is in stage
\(i\) at time \(t\). Equivalently, if \(N_k(t)\) is a Poisson process with rate
\(\frac{k}{\tau}\), then $p_{k,i}(t)=\mathbb P\{N_k(t)=i-1\}$ up to the harmless indexing convention. Therefore
\[
\sum_{i=0}^{k-1}\varphi(i/k)p_{k,i}(t)
=
\mathbb E\left[\varphi\left(\frac{N_k(t)}{k}\right)\right]+o(1).
\]
Since $\frac{N_k(t)}{k}\to \frac{t}{\tau}$ in probability by the law of large numbers for the Poisson process, bounded convergence gives $\mathbb E\left[\varphi\left(\frac{N_k(t)}{k}\right)\right]
\to
\varphi \big( \frac{t}{\tau} \big).$ Consequently,
\[
\int_0^1 \varphi(x)\,\mu_k^t(dx)
\to
\varepsilon \varphi\bigg( \frac{t}{\tau} \bigg).
\]
This proves the weak convergence
\[
\mu_k^t(dx)
=
\varepsilon k p_{k,\lfloor kx\rfloor}(t)\,dx
\Longrightarrow
\varepsilon\delta_{t / \tau}(dx).
\]

\subsection{Proof of Lemma 3}
\label{app:pl3}
\begin{proof}
Note that
\[
A(t_W) = \beta S(t_W)I(t_W) = \frac{1}{\tau}I(t_W).
\]
Using $I(t_W)=\int_{t_W-\tau}^{t_W}A(u)du$, we obtain
\be
\label{eq:Atv}
A(t_W) = \frac{1}{\tau}\int_{t_W-\tau}^{t_W}A(u)du,
\ee
meaning that $A(t_W)$ equals the average value of $A$ over $[t_W-\tau,t_W]$.

We first show  \(t^*\le t_W\). Indeed, if  \(t_W<t^*\), then \(A\) would be  strictly increasing on \([t_W-\tau,t_W]\), implying
\[
A(t_W)>
\frac1\tau\int_{t_W-\tau}^{t_W}A(u)\,du,
\]
contradicting \eqref{eq:Atv}. Hence we must have \(t^*\le t_W\).

Similarly, \(t^*\) cannot satisfy \(t^*\le t_W-\tau\). Otherwise \(A\) is
strictly decreasing on \([t_W-\tau,t_W]\), yielding
\[
A(t_W)<
\frac1\tau\int_{t_W-\tau}^{t_W}A(u)\,du,
\]
again contradicting \eqref{eq:Atv}. Therefore
\begin{equation}\label{eq:1}
t_W-\tau<t^*\le t_W.
\end{equation}

Next we show that \(A(t_W)>A(t_W-\tau)\). Suppose instead that
\(A(t_W)\le A(t_W-\tau)\). By strict unimodality and \eqref{eq:1},
\[
A(u)\geq A(t_W),
\qquad
u\in[t_W-\tau,t_W],
\]
with strict inequality on a set of positive measure. Consequently,
\[
\frac1\tau\int_{t_W-\tau}^{t_W}A(u)\,du>A(t_W),
\]
contradicting \eqref{eq:Atv}. Thus
\[
A(t_W)>A(t_W-\tau).
\]
Since \(I'(t)=A(t)-A(t-\tau)\), it follows that
\[
I'(t_W)=A(t_W)-A(t_W-\tau)>0,
\]
and therefore
\begin{equation}\label{eq:2}
t_W<t_I.
\end{equation}
At the prevalence peak, $I'(t_I)=0$, so $A(t_I)=A(t_I-\tau)$. Since \(t_W<t_I\) and \(t^*\le t_W\), we have \(t_I>t^*\). If
\(t_I-\tau\geq t^*\), then both \(t_I-\tau\) and \(t_I\) lie on the
decreasing branch of the unimodal incidence curve, implying $A(t_I)<A(t_I-\tau)$,
a contradiction. Hence
\begin{equation}\label{eq:3}
t_I-\tau<t^*.
\end{equation}
\end{proof}

\subsection{Additional results for Section~\ref{ssec:accuracy}}
\label{app:accr}

\subsubsection{Proof of Proposition~\ref{pro2}}
\begin{proof}
Note that
\[
W_{\max} = 1 - \frac{1+\ln\big( R_0 (1-\epsilon) \big)}{R_0}.
\]
Therefore,
\[
E = \frac{2 - \frac{2}{R_0} \bigg( 1 + \ln \big( R_0 (1-\epsilon) \big) \bigg)}{I_{\max}} - 1
\]
and inequality $E<\eta$ is equivalent to 
\[
1 - \frac{1}{R_0} \bigg( 1 + \ln \big( R_0 (1-\epsilon) \big) \bigg) < \frac{\eta+1}{2} I_{\max}.
\]
Since $I_{\max} \leq 1$ and assuming that $\epsilon$ is small, a necessary condition is
\[
f(R_0) > \frac{1-\eta}{2},
\]
where $f(R_0) = \frac{1+\ln(R_0)}{R_0}$. Note that $f(R_0)$ is a continuous and strictly decreasing function on $(1, +\infty)$ with $f(1)=1$ and $\lim_{R_0 \to \infty}f(R_0) = 0$. Then, according to the Intermediate Value Theorem, the above inequality is equivalent to
\[
R_0 < R_0^{\#}.
\]
\end{proof}

\subsubsection{Proof of Proposition~\ref{th:R1}}
\begin{proof} Since $t_I > \tau$, we get
\be
\label{eq:t1}
I_{\max} = S(t_I - \tau) - S(t_I).
\ee
Since $t_I \notin \{\tau, 2\tau, .., j\tau,.. \}$ it follows that 
\[
\dot{I}(t_I) = 0 = \dot{S}(t_I - \tau) - \dot{S}(t_I) = -\beta S(t_I - \tau) I(t_I - \tau) + \beta S(t_I) I_{\max},
\]
which gives
\be
\label{eq:t2}
I_{\max} = \frac{S(t_I - \tau) I(t_I - \tau)}{S(t_I)}.
\ee
Since $t_I > 2 \tau$, we get
\be
\label{eq:t3}
I(t_I - \tau) =  S(t_I - 2\tau) - S(t_I - \tau) \leq 1 - S(t_I - \tau).
\ee
Plugging~\eqref{eq:t1} and~\eqref{eq:t3} into~\eqref{eq:t2} gives
\[
S^2(t_I - \tau) + \big( S(t_I)-1 \big) S(t_I - \tau) - S^2(t_I) \leq 0,
\]
which combining with the condition $S(t_I - \tau) \geq 0$ is equivalent to
\[
0 \leq S(t_I - \tau) \leq \frac{1-S(t_I) +\sqrt{1-2S(t_I)+5S^2(t_I)} }{2}.
\]
Plugging this inequality into~\eqref{eq:t1} gives
\be
\label{eq:t4}
I_{\max} \leq G\big( S(t_I) \big).
\ee
Under the proposition assumptions, $S(t_I) > \frac{1}{R_0} - \psi >0$. Since the function $G(\cdot)$ is monotonically decreasing on $(0,1)$,~\eqref{eq:t4} transforms into
\[
I_{\max} < G\bigg( \frac{1}{R_0} - \psi \bigg).
\]
Finally, with $\epsilon \to 0$, inequality $E < \eta$ gives 
\[
1 - \frac{1+\ln(R_0)}{R_0} < \frac{\eta+1}{2} I_{\max} < \frac{\eta+1}{2} G\bigg( \frac{1}{R_0} - \psi \bigg),
\]
which proves the proposition.
\end{proof}

\subsubsection{Proof of Proposition~\ref{th:psi}}
\begin{proof} Let $r = \frac{S(t_I-\tau)}{S(t_I)}$. Note that $r>1$ since $S$ is a decreasing function. First, we will represent $I_{\max}$ and $I(t_I-\tau)$ as functions of $r$. Since $t_I>2\tau$,
\be
\label{eq:a1}
I_{\max} = I(t_I) = S(t_I-\tau) - S(t_I) = S(t_I)(r-1).
\ee
Since $t_I\notin\{\tau,2\tau,3\tau,\ldots\}$,
\be
\label{eq:a2}
\dot{I}(t_I) = 0 \Leftrightarrow S(t_I-\tau)I(t_I-\tau) = S(t_I)I(t_I) \Leftrightarrow I(t_I - \tau) = \frac{I_{\max}}{r}.
\ee

Our main goal is to estimate the quantity $\int_{t_I-\tau}^{t_I} I(u)du$. Note that since $S$ is a decreasing function, for each $t \in [t_I-\tau, t_I]$:
\[
\dot{I}(t) = \beta S(t) I(t) - \beta S(t-\tau) I(t-\tau) \leq \beta S(t) I(t) \leq \beta r S(t_I) I(t).
\]
Therefore, applying the Grönwall's inequality, we get:
\[
I(t) \leq I(t_I-\tau) e^{\beta r S(t_I)\int_{t_I-\tau}^{t} du}, 
\]
which combining with~\eqref{eq:a2} gives
\[
I(t) \leq \frac{I_{\max}}{r} e^{\beta r S(t_I)(t-t_I+\tau)}.
\]
Since in addition $I(t) \leq I_{\max}$, we finally get
\[
I(t) \leq \min \bigg\{ I_{\max}, \frac{I_{\max}}{r} e^{\beta r S(t_I)(t-t_I+\tau)} \bigg\}.
\]
Given that $t' = \frac{\ln(r)}{\beta r S(t_I)} + t_I -\tau \in (t_I-\tau,t_I)$ (see Lemma~\ref{l:a1}),
\[
\int_{t_I-\tau}^{t_I} I(u)du \leq \int_{t_I-\tau}^{t'} \frac{I_{\max}}{r} e^{\beta r S(t_I)(u-t_I+\tau)}du + \int_{t'}^{t_I} I_{\max}du \leq \tau I_{\max} - \frac{I_{\max}}{\beta r S(t_I)}\Big(\frac{1}{r}-1+\ln(r)\Big).
\]
However since $\dot{S} = -\beta S(t)I(t)$, it follows that $\ln(r) = \beta \int_{t_I-\tau}^{t_I} I(u)du$. Plugging this into the previous inequality, we get
\[
\ln(r) \leq \beta \tau I_{\max} - \frac{I_{\max}}{r S(t_I)}\Big(\frac{1}{r}-1+\ln(r)\Big).
\]
Using~\eqref{eq:a1} and the fact that $R_0 = \beta \tau$, the above inequality can be rewritten as:
\be
\label{eq:a3}
S(t_I) \geq \frac{a(r)}{R_0}.
\ee

Finally, let's derive an upper bound for $r$. First, note that combining~\eqref{eq:a1}-\eqref{eq:a2}, we get:
\[
S(t_I - 2\tau) - S(t_I - \tau) = I(t_I - \tau) = S(t_I)(1-1/r).
\]
Therefore,
\[
S(t_I - 2\tau) = S(t_I)(r+1-1/r).
\]
Combining with~\eqref{eq:a3}, this gives
\[
a(r) \bigg(r +1 -\frac{1}{r} \bigg) \leq R_0.
\]
Following Lemma~\ref{l:a3} (see below) it implies $r \leq r^*(R_0)$, where $r^*(R_0)$ is the unique solution to~\eqref{eg:a01} on $(1,+\infty)$. According to Lemma~\ref{l:a2} below, $a(r)$ is decreasing on $(1,+\infty)$ and hence $a(r) \geq a(r^*(R_0))$. As a result: 
\[
S(t_I) \geq \frac{a(r^*(R_0))}{R_0},
\]
which establishes  the proposition once the Lemmas \ref{l:a1}--\ref{l:a3} are verified.
\end{proof}


\begin{lemma} 
\label{l:a1}
Assume that $t_I>2\tau$, $t_I\notin\{\tau,2\tau,3\tau,\ldots\}$. Let $r = \frac{S(t_I-\tau)}{S(t_I)}$. Then,
\[
t' = \frac{ln(r)}{\beta r S(t_I)} + t_I -\tau \in (t_I-\tau,t_I).
\]
\end{lemma}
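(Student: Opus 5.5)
The plan is to verify directly the two inequalities
\[
0<\frac{\ln(r)}{\beta r S(t_I)}<\tau ,
\]
which are equivalent to $t'\in(t_I-\tau,t_I)$. Both will follow from facts already recorded in the proof of Proposition~\ref{th:psi}: the relation $\ln(r)=\beta\int_{t_I-\tau}^{t_I}I(u)\,du$ and identity \eqref{eq:a1}.

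\emph{Lower inequality.} From $\dot S=-\beta S I$ we have $S(t)=(1-\varepsilon)e^{-\beta\int_0^t I}$, so $S(t_I)>0$. Since $I>0$ on $[t_I-\tau,t_I]$, $S$ is strictly decreasing there, which gives $r=S(t_I-\tau)/S(t_I)>1$. Hence $\ln(r)>0$, and with $\beta r S(t_I)>0$ the quotient is strictly positive. This yields $t'>t_I-\tau$.

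\emph{Upper inequality.} Integrating $\dot S/S=-\beta I$ over $[t_I-\tau,t_I]$ gives $\ln(r)=\beta\int_{t_I-\tau}^{t_I}I(u)\,du$. Bounding $I(u)\le I_{\max}$ on this interval gives $\ln(r)\le \beta\tau I_{\max}$. Because $t_I>2\tau$, identity \eqref{eq:a1} applies, namely $I_{\max}=S(t_I)(r-1)$. Using $R_0=\beta\tau$, this becomes
\[
\ln(r)\le R_0 S(t_I)(r-1).
\]
Since $S(t_I)>0$, we have $(r-1)<r$ strictly, and therefore
\[
\ln(r)< R_0 S(t_I)\, r=\beta\tau r S(t_I).
\]
Dividing by $\beta r S(t_I)$ gives the quotient $<\tau$, i.e.\ $t'<t_I$.

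The argument is short. The one point needing care is strictness in the upper bound. It does not come from the integral estimate, which may be non-strict; it comes from replacing $r-1$ by $r$, which is strict because $S(t_I)>0$. The hypothesis $t_I\notin\{\tau,2\tau,\ldots\}$ is not needed here beyond its role in defining the setting; only $t_I>2\tau$ is used, through \eqref{eq:a1}.
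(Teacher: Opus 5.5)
Your proof is correct and uses the same estimate as the paper: $r>1$ together with $\ln r=\beta\int_{t_I-\tau}^{t_I}I(u)\,du\le\beta\tau I_{\max}=\beta\tau(r-1)S(t_I)<\beta\tau rS(t_I)$. The paper phrases this as $f_2(t_I-\tau)<I_{\max}<f_2(t_I)$ for the increasing function $f_2(t)=\frac{I_{\max}}{r}e^{\beta rS(t_I)(t-t_I+\tau)}$ and then invokes the Intermediate Value Theorem, whereas you check $0<\ln(r)/(\beta rS(t_I))<\tau$ directly, which is slightly cleaner since $t'$ is given explicitly.
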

\begin{proof} 
First, note that $t'$ is the unique point of intersection of the functions $f_1(t) = I_{\max}$ and $f_2(t) = \frac{I_{\max}}{r} e^{\beta r S(t_I)(t-t_I+\tau)}$. Second, note that 
\[
f_2(t_I-\tau) = \frac{I_{\max}}{r} < I_{\max}
\]
and
\[
f_2(t_I) = \frac{I_{\max}}{r} e^{\beta r S(t_I) \tau}.
\]
Since $\dot{S} = -\beta S(t)I(t)$ and employing~\eqref{eq:a1}, it follows that 
\[
\ln(r) = \beta \int_{t_I-\tau}^{t_I} I(u)du \leq \beta I_{\max} \tau = \beta (r-1) S(t_I) \tau < \beta r S(t_I) \tau,
\]
which implies
\[
r < e^{\beta r S(t_I) \tau},
\]
hence
\[
f_2(t_I) > I_{\max}.
\]
Since $f_2$ is continuous on $(t_I-\tau,t_I)$, the Intermediate Value Theorem guarantees that $t' \in (t_I-\tau,t_I)$, which proves the lemma.
\end{proof}

\begin{lemma}
\label{l:a2}
The function
\[
a(r)= \frac{\ln(r)+\frac{r-1}{r}\left(\ln(r)-1+\frac{1}{r}\right)}{r-1}
\]
is strictly decreasing for  $r>1$.
\end{lemma}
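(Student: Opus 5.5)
The plan is to simplify $a$ algebraically, differentiate once, and reduce the sign of $a'(r)$ on $(1,\infty)$ to a polynomial inequality by dividing out the logarithm.

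\textbf{Step 1: simplify and differentiate.} Expanding the numerator gives
\[
\ln r+\Big(1-\tfrac1r\Big)\Big(\ln r-1+\tfrac1r\Big)=\Big(2-\tfrac1r\Big)\ln r-\frac{(r-1)^2}{r^2},
\]
so that
\[
a(r)=\frac{(2r-1)\ln r}{r(r-1)}-\frac{r-1}{r^2}.
\]
Differentiating term by term and clearing the positive denominator $r^3(r-1)^2$, we get $r^3(r-1)^2a'(r)=h(r)$. Here
\[
h(r)=P(r)-Q(r)\ln r,\qquad P(r)=3r^3-7r^2+6r-2,\qquad Q(r)=r(2r^2-2r+1).
\]
This should be double-checked: the first term contributes $r\big[(2r-1)(r-1)-(2r^2-2r+1)\ln r\big]$, and the term $-(r-1)/r^2$ contributes $(r-2)(r-1)^2$. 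It therefore suffices to show $h(r)<0$ for all $r>1$.

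\textbf{Step 2: remove the logarithm.} Since $Q>0$ for $r>0$, set $g(r)=P(r)/Q(r)-\ln r$, so that $h=Qg$. At $r=1$ we have $P(1)=Q(1)=1$, hence $g(1)=0$. It is then enough to prove $g'(r)<0$ for $r>1$. This is equivalent to the purely polynomial inequality
\[
D(r):=r\big(P'(r)Q(r)-P(r)Q'(r)\big)-Q(r)^2<0,\qquad r>1.
\]

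\textbf{Step 3: analyse $D$.} Expand $D$, a polynomial of degree at most six. I expect it to vanish at $r=1$ to some order, because $a$ is smooth at $r=1$ and the construction of $h$ introduces the factor $(r-1)^2$. I would factor out the maximal power $(r-1)^m$ and then check that the remaining factor has constant negative sign on $(1,\infty)$. The check would use the leading coefficient and a Descartes/Sturm-type count, or a rewriting in the variable $u=r-1>0$ to make all coefficients of one sign.

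\textbf{Expected obstacle and fallback.} The main risk is Step 3: the cofactor might not have coefficients of one sign in $u$. If that happens, I would instead differentiate $g$ (or $h$) repeatedly at $r=1$. The idea is to show that the first nonvanishing derivative of $h$ at $r=1$ is negative, and that the relevant higher derivative is negative on all of $(1,\infty)$. Then successive integration from $r=1$ gives $h<0$, and hence $a'<0$ on $(1,\infty)$.
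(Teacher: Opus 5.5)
Your plan is the paper's proof: the paper uses the same rewriting of $a$ and the same numerator $3r^3-7r^2+6r-2-r(2r^2-2r+1)\ln r$, and works with $B=-g$. It also carries out your Step 3, which in your notation gives $D(r)=-r(r-1)\bigl(4u^4+4u^3+2u^2+u+1\bigr)$ with $u=r-1$, so $m=1$, the cofactor has coefficients of one sign, and no fallback is needed. One small slip: $P(1)=3-7+6-2=0$, not $1$, and it is $P(1)=0$, $Q(1)=1$ that actually gives $g(1)=0$.
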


\begin{proof}
First, rewrite $a(r)$ as
\[
a(r) = \frac{(2r-1)\ln r}{r(r-1)} - \frac{r-1}{r^2}.
\]
Differentiating gives
\[
a'(r) = \frac{3r^3-7r^2+6r-2-r(2r^2-2r+1)\ln r}{r^3(r-1)^2}.
\]
Since the denominator is positive for $r>1$, it remains to show that the numerator is negative. Define
\[
B(r) = \ln r - \frac{3r^3-7r^2+6r-2}{r(2r^2-2r+1)}.
\]
Then the numerator of $a'(r)$ can be written as $-r(2r^2-2r+1)B(r)$. As a result, it is enough to show that $B(r)>0$ for all $r>1$. To do this, calculate
\[
B'(r) = \frac{ (r-1) (4r^4-12r^3+14r^2-7r+2)} {r^2(2r^2-2r+1)^2}.
\]
The denominator is positive for $r>1$. Note that
\[
4r^4-12r^3+14r^2-7r+2 = 4(r-1)^4+4(r-1)^3+2(r-1)^2+(r-1)+1>0.
\]
Therefore $B'(r)>0$ for all $r>1$. Since $B(1)=0$, it follows that
\[
B(r)>0 \qquad \text{for all } r>1,
\]
which proves the lemma.
\end{proof}

\begin{lemma}
\label{l:a3}
The function
\[
F(r)=a(r)\bigg(r+1-\frac{1}{r}\bigg)
\]
is strictly increasing on $(1,\infty)$. Moreover,
\[
\lim_{r\to 1^+}F(r)=1, \qquad \lim_{r\to\infty}F(r)=+\infty.
\]
\end{lemma}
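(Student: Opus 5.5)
The plan is to work with an explicit closed form for $F$ and reduce monotonicity to a sign statement about a polynomial, in the same way Lemma~\ref{l:a2} was handled. Using the rewritten form of $a$ from the proof of Lemma~\ref{l:a2}, namely $a(r)=\frac{(2r-1)\ln r}{r(r-1)}-\frac{r-1}{r^2}$, and $r+1-\frac1r=\frac{r^2+r-1}{r}$, I would write
\[
F(r)=\frac{(2r-1)(r^2+r-1)}{r^2}\cdot\frac{\ln r}{r-1}-\frac{(r-1)(r^2+r-1)}{r^3}.
\]

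\textbf{Limits.} These follow immediately from this form. As $r\to1^+$ we have $\ln r/(r-1)\to1$, the prefactor tends to $(1)(1)/1=1$, and the second term tends to $0$, so $F\to1$. As $r\to\infty$ the first term behaves like $2r\ln r\to+\infty$, while the second term tends to $-1$, so $F\to+\infty$.

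\textbf{Monotonicity.} I would differentiate and put everything over the common positive denominator $r^4(r-1)^2$, obtaining
\[
F'(r)=\frac{P(r)\ln r+Q(r)}{r^4(r-1)^2}
\]
for explicit polynomials $P$ and $Q$. The steps are then:
\begin{itemize}
\item Check that $P(r)>0$ for $r>1$. I would do this by substituting $r=1+s$ and verifying that all coefficients in $s$ are nonnegative, with a positive constant term.
\item Define $C(r)=\ln r+Q(r)/P(r)$, so that the numerator of $F'$ equals $P(r)C(r)$. It then suffices to show $C(r)>0$ on $(1,\infty)$.
\item Verify $C(1)=0$, which holds because $F'$ must stay finite as $r\to1$.
\item Compute $C'(r)=\frac{N(r)}{r\,P(r)^2}$ with $N$ a polynomial. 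Expand $N$ in powers of $s=r-1$ and show it is positive for $s>0$, ideally because all its coefficients are nonnegative, exactly as was done for $4r^4-12r^3+\dots$ in Lemma~\ref{l:a2}.
\end{itemize}
Then $C'>0$ on $(1,\infty)$ together with $C(1)=0$ gives $C>0$, hence $F'>0$ and $F$ is strictly increasing. Combined with the limits and continuity, this yields the uniqueness of $r^*(R_0)$ used in Proposition~\ref{th:psi}.

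\textbf{Main obstacle.} I expect the difficulty to be the algebra in the last step. The numerator of $F'$ vanishes to higher order at $r=1$ (a double zero is forced by the $(r-1)^2$ in the denominator). Consequently $N$ may itself have a factor $(r-1)^m$, and the remaining cofactor may not have all nonnegative coefficients in $s$.

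If that happens, I would first try factoring out the power of $s$ and grouping the cofactor's terms into sums of squares. If that fails, I would iterate the same device one more time: divide by the appropriate polynomial, differentiate again, and use the vanishing at $r=1$.

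Note that $F$ cannot simply be handled as a product of monotone factors, since $a$ is decreasing by Lemma~\ref{l:a2} while $r+1-1/r$ is increasing. A direct derivative computation seems unavoidable.
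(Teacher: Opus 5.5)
Your plan is essentially the paper's proof. The paper uses the same closed form for $F$ and writes the numerator of $F'$ as $Q(r)+P(r)\ln r$ with $Q(r)=2r^5-r^4-8r^3+15r^2-11r+3$. It then divides by the coefficient of the logarithm and shows the resulting function vanishes at $s=r-1=0$, with derivative $s(3+20s+66s^2+125s^3+141s^4+104s^5+54s^6+21s^7+6s^8)$ over a positive denominator, so the obstacle you anticipate does not arise.

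Two slips need fixing, neither affecting the conclusions. First, the coefficient is $P(r)=-r(3r^3-6r^2+6r-2)=-(1+4s+6s^2+6s^3+3s^4)<0$, not positive, so your check that $P>0$ would fail. You must instead show $C<0$; the paper works with $Z=-C>0$. Second, as $r\to\infty$ the first term of $F$ is asymptotic to $2\ln r$, not $2r\ln r$, which still gives $F\to+\infty$.
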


\begin{proof}
First note that
\be
\label{eq:aa1}
a(r)= \frac{(2r-1)\ln r}{r(r-1)} - \frac{r-1}{r^2}.
\ee
Therefore
\[
F(r) = \frac{(r^2+r-1)(2r-1)\ln r}{r^2(r-1)} - \frac{(r^2+r-1)(r-1)}{r^3}.
\]
Differentiating gives
\[
F'(r) = \frac{2r^5-r^4-8r^3+15r^2-11r+3-r(3r^3-6r^2+6r-2)\ln r}{r^4(r-1)^2}.
\]
Since the denominator is positive for $r>1$, it remains to prove that the numerator is positive. To do this, set $x=r-1>0$. Then the numerator of $F'(r)$ becomes
\[
x+5x^2+8x^3+9x^4+2x^5 - (1+4x+6x^2+6x^3+3x^4)\ln(1+x).
\]
Note that it can be rewritten as $(1+4x+6x^2+6x^3+3x^4) Z(x)$, where
\[
Z(x)= \frac{x+5x^2+8x^3+9x^4+2x^5}{1+4x+6x^2+6x^3+3x^4} - \ln(1+x).
\]
As a result, $F'(r)>0$ for all $r>1$ if and only if $Z(x)>0$ for all $x>0$. Direct differentiation of $Z(x)$ yields
\[
Z'(x)= \frac{x (3+20x+66x^2+125x^3+141x^4+104x^5+54x^6+21x^7+6x^8)}{(1+x)(1+4x+6x^2+6x^3+3x^4)^2}.
\]
Therefore, $Z'(x)>0$ for all $x>0$. Since $\lim_{x \to 0^+} Z(x)=0$, it follows that $Z(x)>0$ for all $x>0$. As a result, $F$ is strictly increasing on $(1,\infty)$.

It remains to compute the limit of $F$ as $r \to \infty$. Using~\eqref{eq:aa1}, we get
\[
a(r)\sim \frac{2\ln r-1}{r}.
\]
In addition, $r+1-\frac{1}{r}\sim r$. Therefore,
\[
F(r) = a(r)\bigg(r+1-\frac{1}{r}\bigg)
\sim
2\ln r-1,
\]
and hence $\lim_{r\to\infty}F(r)=+\infty$, which completes the proof.
\end{proof}

\subsection{Additional results for Section~\ref{ssec:refined}}

\subsubsection{Derivation of the first-order approximation (FO)}
\label{app:FO_derivation}

Here we derive the first-order approximation
\[
\widehat I_{\max}^{\mathrm{FO}} \approx 2W_{\max}\left(1-\frac{\kappa\tau^2}{72}\right).
\]
Using $e^{-z^2/2}=1-\frac{z^2}{2}+O(z^4)$, we obtain
\[
J_I(\lambda) = \int_{-\lambda/2}^{\lambda/2}\left(1-\frac{z^2}{2}+O(z^4)\right)dz =
\lambda-\frac{\lambda^3}{24}+O(\lambda^5).
\]
Next, we denote $r=\lambda y$. Then
\[
J_W(\lambda) = \max_x \lambda \int_0^1 (1-y)e^{-(x-\lambda y)^2/2}\,dy .
\]
For $\lambda\ll 1$, the maximizer satisfies $x=O(\lambda)$, so
\[
e^{-(x-\lambda y)^2/2} = 1-\frac{(x-\lambda y)^2}{2} + O(\lambda^4).
\]
Therefore,
\[ 
J_W(\lambda) = \lambda\int_0^1(1-y)\,dy - \frac{\lambda}{2} \min_x \int_0^1(1-y)(x-\lambda y)^2\,dy + O(\lambda^5).
\]
The minimizer satisfies
\[
x\int_0^1(1-y)\,dy = \lambda\int_0^1y(1-y)\,dy,
\]
and hence $x=\frac{\lambda}{3}$. Substituting this value gives
\[
J_W(\lambda) = \frac{\lambda}{2} - \frac{\lambda^3}{72} + O(\lambda^5).
\]
Combining the two expansions gives
\[
C(\lambda) = 2 \frac{1-\lambda^2/24+O(\lambda^4)} {1-\lambda^2/36+O(\lambda^4)} =
2\left(1-\frac{\lambda^2}{72}\right) + O(\lambda^4).
\]
and
\[
\widehat I_{\max}^{\mathrm{FO}} \approx 2W_{\max} \left(1-\frac{\kappa\tau^2}{72}\right).
\]

\subsubsection{Derivation of the large-$\lambda$ approximations $L(0)$ and $L(1)$}
\label{app:L_derivation}

Here we derive the large-$\lambda$ approximations
\[
\widehat I_{\max}^{L(m)}
=
\Bigg[
\frac{\sqrt{2\pi}\left[2\Phi(\lambda/2)-1\right]}
{e^{-(x_W^{(m)})^2/2}
\left(\lambda-x_W^{(m)}-\lambda^{-1}\right)}
\Bigg]W_{\max},
\qquad m=0,1.
\]
For $J_I(\lambda)$, we have exactly
\[
J_I(\lambda)
=
\int_{-\lambda/2}^{\lambda/2}e^{-z^2/2}\,dz
=
\sqrt{2\pi}\left[2\Phi(\lambda/2)-1\right].
\]
It remains to approximate $J_W(\lambda)$. Let
\[
F(x)
=
\int_0^\lambda
\left(1-\frac{r}{\lambda}\right)
e^{-(x-r)^2/2}\,dr .
\]
Then $J_W(\lambda)=\max_x F(x)$. Setting $z=x-r$, we can rewrite
\[
F(x)
=
\int_{x-\lambda}^{x}
\left(1-\frac{x-z}{\lambda}\right)e^{-z^2/2}\,dz .
\]
Differentiating with respect to $x$ gives
\[
F'(x)
=
e^{-x^2/2}
-
\frac{1}{\lambda}
\int_{x-\lambda}^{x}e^{-z^2/2}\,dz .
\]
Therefore, the maximizer $x_W$ satisfies
\[
\lambda e^{-x_W^2/2}
=
\int_{x_W-\lambda}^{x_W}e^{-z^2/2}\,dz .
\]
For large $\lambda$, the lower limit $x_W-\lambda$ is far in the left tail, and therefore
\[
\int_{x_W-\lambda}^{x_W}e^{-z^2/2}\,dz
\approx
\sqrt{2\pi}\Phi(x_W).
\]
Thus $x_W$ approximately satisfies
\[
\lambda e^{-x_W^2/2}
\approx
\sqrt{2\pi}\Phi(x_W),
\]
or equivalently
\[
x_W
\approx
\left[
2\ln\left(
\frac{\lambda}{\sqrt{2\pi}\Phi(x_W)}
\right)
\right]^{1/2}.
\]

This motivates an iterative approximation for $x_W$. The zeroth-order approximation is obtained by setting $\Phi(x_W)\approx 1$, giving
\[
x_W^{(0)}
=
\left[
2\ln\left(
\frac{\lambda}{\sqrt{2\pi}}
\right)
\right]^{1/2}.
\]
This requires $\lambda>\sqrt{2\pi}$. Substituting $x_W^{(0)}$ once into the right-hand side gives the one-step approximation
\[
x_W^{(1)}
=
\left[
2\ln\left(
\frac{\lambda}{\sqrt{2\pi}\Phi(x_W^{(0)})}
\right)
\right]^{1/2}.
\]

Finally, we approximate $J_W(\lambda)=F(x_W)$. Using
\[
F(x)
=
\left(1-\frac{x}{\lambda}\right)
\int_{x-\lambda}^{x}e^{-z^2/2}\,dz
+
\frac{1}{\lambda}
\int_{x-\lambda}^{x}z e^{-z^2/2}\,dz ,
\]
and
\[
\int_{x-\lambda}^{x}z e^{-z^2/2}\,dz
=
e^{-(x-\lambda)^2/2}-e^{-x^2/2},
\]
we obtain, for large $\lambda$,
\[
F(x)
\approx
\left(1-\frac{x}{\lambda}\right)
\int_{x-\lambda}^{x}e^{-z^2/2}\,dz
-
\frac{1}{\lambda}e^{-x^2/2}.
\]
At the maximizer, we use
\[
\int_{x_W-\lambda}^{x_W}e^{-z^2/2}\,dz
\approx
\lambda e^{-x_W^2/2}.
\]
Therefore,
\[
J_W(\lambda)
=
F(x_W)
\approx
e^{-x_W^2/2}
\left(\lambda-x_W-\lambda^{-1}\right).
\]
Replacing $x_W$ by $x_W^{(m)}$, $m=0,1$, gives
\[
J_W(\lambda)
\approx
e^{-(x_W^{(m)})^2/2}
\left(\lambda-x_W^{(m)}-\lambda^{-1}\right).
\]
Hence
\[
C(\lambda)
=
\frac{J_I(\lambda)}{J_W(\lambda)}
\approx
\frac{\sqrt{2\pi}\left[2\Phi(\lambda/2)-1\right]}
{e^{-(x_W^{(m)})^2/2}
\left(\lambda-x_W^{(m)}-\lambda^{-1}\right)}.
\]
Thus,
\[
\widehat I_{\max}^{L(m)}
=
\Bigg[
\frac{\sqrt{2\pi}\left[2\Phi(\lambda/2)-1\right]}
{e^{-(x_W^{(m)})^2/2}
\left(\lambda-x_W^{(m)}-\lambda^{-1}\right)}
\Bigg]W_{\max},
\qquad m=0,1.
\]
The case $m=0$ gives the zeroth-order large-$\lambda$ approximation $L(0)$, while $m=1$ gives the one-step refinement $L(1)$.

\subsubsection{Proof of Lemma~\ref{th:lam}}
\label{app:plam}
\begin{proof}
Since $t^*>\tau$, we have
\[
I(t)=\int_{t-\tau}^{t}A(s)ds \quad\text{implying}\quad  I'(t)=A(t)-A(t-\tau).
\]
In addition,
\be
\label{eq:b1}
\frac{d}{dt}\ln A(t) = \frac{S'(t)}{S(t)} + \frac{I'(t)}{I(t)} = -\beta I(t)+\frac{I'(t)}{I(t)} = -\beta I(t) + \frac{A(t)-A(t-\tau)}{I(t)}.
\ee
At $t=t^*$, the incidence reaches its maximum, so
\[
\left.\frac{d}{dt}\ln A(t)\right|_{t=t^*}=0.
\]
Hence
\be
\label{eq:b3}
-\beta I(t^*) + \frac{A(t^*)-A(t^*-\tau)}{I(t^*)} = 0, \quad \text{and thus}\quad  A(t^*-\tau) = \beta I(t^*)\bigl(S(t^*)-I(t^*)\bigr).
\ee
Note that this implies $I(t^*)\leq S(t^*)$. Next note that
\[
\frac{d^2}{dt^2}\ln A(t) = -\beta I'(t) + \frac{I''(t)}{I(t)} - \left(\frac{I'(t)}{I(t)}\right)^2.
\]
At $t=t^*$, we have
\[
I'(t^*)=\beta I(t^*)^2.
\]
Also since $I''(t)=A'(t)-A'(t-\tau)$ and $A'(t^*)=0$, we get:
\[
I''(t^*)=-A'(t^*-\tau).
\]
Therefore,
\be
\label{eq:b2}
\kappa = -\left.\frac{d^2}{dt^2}\ln A(t)\right|_{t=t^*} = 2\beta^2 I(t^*)^2 + \frac{A'(t^*-\tau)}{I(t^*)}.
\ee

First, we bound the second term. Note that
\[
A'(t^*-\tau) = A(t^*-\tau) \left.\frac{d}{dt}\ln A(t)\right|_{t=t^*-\tau}.
\]
Employing~\eqref{eq:b1} and using the fact that $A(t)=\beta S(t)I(t)$, we obtain
\[
\frac{d}{dt}\ln A(t) = \beta\bigl(S(t)-I(t)\bigr) - \frac{A(t-\tau)}{I(t)} \leq \beta S(t) \leq \beta,
\]
which implies
\[
A'(t^*-\tau)\leq \beta A(t^*-\tau).
\]
Using~\eqref{eq:b3}, we conclude
\[
\frac{A'(t^*-\tau)}{I(t^*)}
\leq
\beta^2\bigl(S(t^*)-I(t^*)\bigr).
\]
Plugging this into~\eqref{eq:b2}, we get
\be
\label{eq:b4}
\kappa \leq \beta^2 \bigl(2I(t^*)^2+S(t^*)-I(t^*)\bigr).
\ee

Second, it remains to bound the expression in parentheses. We already showed that $I(t^*)\leq S(t^*)$. Also, $S(t^*)+I(t^*)\leq 1$. Finally, since $A(t)$ is strictly unimodal and reaches its maximum at $t^*$, it is increasing for $t^*-\tau\leq s\leq t^*$. Consequently, $A(s)\geq A(t^*-\tau)$ for every $s$ satisfying $t^*-\tau\leq s\leq t^*$. Therefore using~\eqref{eq:b3}, we conclude
\[
I(t^*) = \int_{t^*-\tau}^{t^*}A(s)ds \geq \tau A(t^*-\tau) \geq \tau\beta I(t^*)\bigl(S(t^*)-I(t^*)\bigr).
\]
Consequently since $I(t^*)>0$,
\[
S(t^*)-I(t^*)\leq \frac{1}{\tau \beta}.
\]

To get an upper bound for $\bigl(2I(t^*)^2+S(t^*)-I(t^*)\bigr)$, we can consider the following optimization problem:
\[
F= 2I^2+S-I \rightarrow \max_{I,S}
\]
\[
s.t.
\]
\[
I\leq S, \qquad S+I\leq 1, \qquad S-I\leq \frac{1}{\tau \beta}.
\]
Function $F$ is convex on its domain and for fixed $I$, it increases with $S$. As a result and since $R_0 \geq 1$, it can attain its maximum in one of the three points $(I,S)$: $p_0 = (0,1/(\tau \beta))$, $p_1 = (0.5 - 0.5/(\tau \beta), 0.5 + 0.5/(\tau \beta))$ or $p_2 = (0.5,0.5)$. Note that $F(p_0) = \frac{1}{\tau \beta}$, $F(p_1) = \frac{1+1/(\tau \beta)^2}{2}$, $F(p_2)=0.5$, $F(p_1) > F(p_0)$, and $F(p_1) > F(p_2)$. Consequently,
\[
\kappa \leq \beta^2 \cdot \frac{1+1/(\tau \beta)^2}{2} = \frac{(\tau \beta)^2+1}{2 \tau^2}.
\]
As a result,
\[
\lambda^2 = \tau^2\kappa \leq \frac{(\tau \beta)^2+1}{2} = \frac{R_0^2+1}{2}.
\]
\end{proof}

\subsubsection{Additional numerical results}
\begin{figure}[ht!]
    \centering
    \includegraphics[width=1\linewidth]{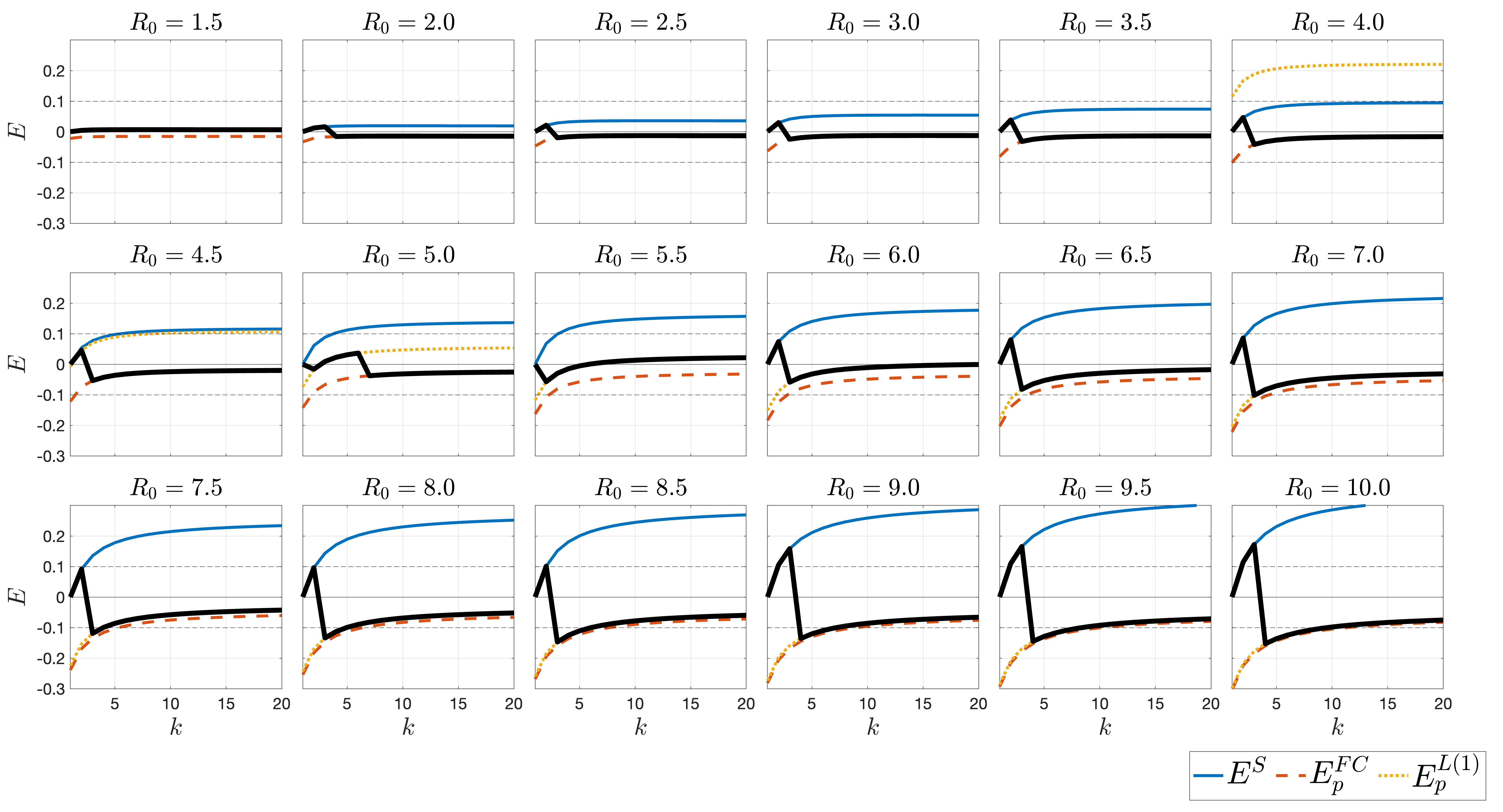}
    \caption{
    Relative errors of approximations of $I_{\max}^{(k)}$ in the finite $SI(k)R$ model as functions of the number of infectious stages $k$, shown for different values of $R_0$. The figure compares the simple approximation (S), the fully corrected plug-in approximation (FC,p) using $\lambda=\sqrt{(R_0^2+1)/2}$, and the corresponding large-$\lambda$ one-step approximation ($L(1),p$). The scaled weighted peak is $W_{\max}=V_{\max}^{(k)}/(k+1)$, and the error is $E=\widehat I_{\max}^{(k)}/I_{\max}^{(k)}-1$, with positive values indicating overestimation. Black lines indicate the approximation with the smallest absolute error for each $k$. Horizontal dashed lines mark $E=0$ and the $\pm0.1$ thresholds. Parameters are $\tau=5$ and $\varepsilon=0.01$. This should be compared with Figure~\ref{fig:5}, where $\tau=10$.
    } \label{fig:6}
\end{figure}

\end{appendices}
\clearpage


\end{document}